\theoremstyle{plain}
\newtheorem{theorem}{\protect\theoremname}
  \theoremstyle{plain}
  \theoremstyle{plain}
  \theoremstyle{plain}
   \newtheorem{lemma}{\protect\lemmaname}
  \theoremstyle{remark}
   \newtheorem{assumption}{\protect\assumptionname}
\theoremstyle{assumption}
    \theoremstyle{proposition}
  \newtheorem{proposition}{\protect\propositionname}
\theoremstyle{algorithm}  
   \theoremstyle{plain}
\providecommand{\definitionname}{Definition}
\providecommand{\lemmaname}{Lemma}
\providecommand{\propositionname}{Proposition}
\providecommand{\remarkname}{Remark}
\providecommand{\theoremname}{Theorem}
\providecommand{\conjecturename}{Conjecture}
\providecommand{\assumptionname}{Assumption}
\providecommand{\algorithmname}{Algorithm}
\begin{document}

 \title{Design of false data injection attack on
distributed process estimation
 \thanks{Moulik Choraria is with  EPFL Switzerland.  Arpan Chattopadhyay is with the Department of Electrical Engineering and the Bharti School of Telecom Technology and Management, Indian Institute of Technology (IIT), Delhi.  Urbashi Mitra is with the Department of Electrical Engineering, University of Southern California.  Erik Strom is with the Department of Signals and Systems, Chalmers University, Sweden. Email: moulik.choraria@epfl.ch, arpanc@ee.iitd.ac.in,  ubli@usc.edu, erik.strom@chalmers.se }
 \thanks{This work was supported by the faculty seed grant and professional development allowance (PDA) of IIT Delhi.
}
\thanks{This manuscript is an extended version of our conference paper \cite{choraria2019optimal}.}
}

\author{
Moulik Choraria, Arpan Chattopadhyay,  Urbashi Mitra, Erik Strom
}

\maketitle
%
%



\ifdefined\SINGLECOLUMN
	\setkeys{Gin}{width=0.5\columnwidth}
	\newcommand{\figfontsize}{\footnotesize} 
\else
	\setkeys{Gin}{width=1.0\columnwidth}
	\newcommand{\figfontsize}{\normalsize} 
\fi

\begin{abstract} 
Herein, design of false data injection attack on a distributed cyber-physical system is considered. A stochastic process with linear dynamics and Gaussian noise is measured by multiple agent nodes, each equipped with multiple sensors. The agent nodes form a multi-hop network among themselves. Each agent node computes an estimate of the process by using its sensor observation and messages obtained from neighboring nodes, via Kalman-consensus filtering. An external attacker, capable of arbitrarily manipulating the sensor observations of some or all agent nodes, injects errors into those sensor observations. The goal of the attacker is to steer the estimates at the agent nodes as close as possible to a pre-specified value, while respecting a constraint on the attack detection probability. To this end, a constrained optimization problem is formulated to find the optimal parameter values of a certain class of linear attacks. The parameters of linear attack are learnt on-line via a combination of   stochastic approximation based update of a Lagrange multiplier, and an optimization technique  involving either the Karush-Kuhn-Tucker (KKT) conditions or   online stochastic gradient descent. The problem turns out to be convex for some special cases. Desired convergence of the proposed algorithms are proved by exploiting the convexity and properties of stochastic approximation algorithms.  Finally, numerical results demonstrate the efficacy of the  attack.
\end{abstract}
\begin{IEEEkeywords}
Attack design, distributed estimation, CPS security,  false data injection attack, Kalman-consensus filter, stochastic approximation.
\end{IEEEkeywords}

\section{Introduction}\label{section:introduction}
In recent times, there have been significant interest in designing cyber-physical systems (CPS) that  combine the cyber world and the physical world via seamless integration of sensing, computation, communication, control and learning. CPS has widespread applications such as networked  monitoring and control of industrial processes, disaster management, smart grids, intelligent transportation systems, etc. These applications critically depend on estimation of a physical process via multiple sensors over a wireless network. However, increasing use of wireless networks in sharing the sensed data has rendered the sensors vulnerable to various cyber-attacks.  In this paper, we focus on {\em false data injection} (FDI) attacks which is an integrity or deception attack where the attacker modifies the information flowing through the network \cite{mo2009secure, mo2014detecting}, in contrast  to a {\em denial-of-service} attack where the attacker blocks system resources ({\em e.g.},  wireless jamming attack  \cite{guan2018distributed}). In FDI, the attacker either breaks the cryptography of the data packets or physically manipulates the sensors ({\em e.g.}, putting a heater near a temperature sensor).

The cyber-physical systems either need to compute the process estimate in a remote estimator ({\em centralized} case), or often multiple nodes or components of the system need to estimate the same process over time via sensor observations and the information shared over a network ({\em distributed} case). 
The problem of FDI attack design and its detection has received significant  attention in recent times; attack design:  conditions for    undetectable FDI  attack \cite{chen2017optimal}, design of a linear deception attack scheme to fool the popular $\chi^2$ detector (see \cite{guo2017optimal}), optimal attack design for noiseless systems \cite{wu2018optimal}. The paper  \cite{chen2016cyber} designs an optimal attack to steer the  state of a control system to a desired target under a  constraint on the attack detection probability. On the other hand, attempts on attack detection includes centralized (and decentralized as well)     schemes for {\em noiseless} systems  \cite{pasqualetti2013attack}, coding of sensor output along with $\chi^2$ detector   \cite{miao2017coding},  comparing the sensor observations with those coming from from a few {\em known safe} sensors  \cite{li2017detection}, and the attack detection and secure estimation schemes based on innovation vectors in \cite{mishra2017secure}.  Attempts on attack-resilient state estimation include: \cite{pajic2017attack} for  {\em bounded} noise, \cite{chattopadhyay2018attack, chattopadhyay2018secure, chattopadhyay2019security} for adaptive filter design using stochastic approximation,  \cite{liu2017dynamic} that uses  sparsity models to characterize the switching location attack in a {\em noiseless} linear system and    state recovery constraints for various attack modes. FDI attack and its mitigation  in power  systems are addressed in \cite{manandhar2014detection, liang2017review, hu2017secure}.  Attack-resilient state estimation and control in noiseless systems are discussed in  \cite{nakahira2018attack} and \cite{fawzi2014secure}. Performance bound of stealthy attack in a single sensor-remote estimator system using Kalman filter was characterized in \cite{bai2017kalman}.

However, there have been very few attempts for attack  mitigation in distributed CPS, except \cite{guan2017distributed} for attack detection and secure estimation, \cite{satchidanandan2016dynamic} for attack detection in networked control system using a certain {\em dynamic watermarking} strategy, and \cite{dorfler2011distributed} for attack detection in power systems. On the other hand, the authors of \cite{moradi2019coordinated} have designed an attack scheme to maximize the network-wide estimation error, which is different from our objective of pushing the estimates across nodes towards a target value, while respecting the attack detection constraint. Also, contrary to \cite{moradi2019coordinated} which adds a simple Gaussian noise to the attacked node's observation, we focus on the class of linear attacks, and provide theoretical convergence results of our proposed  online learning based attack schemes.  To our knowledge, there has been no other attempt to theoretically design an attack strategy in distributed CPS. 
In light of these, our contributions in this paper are the following:

\begin{enumerate}
    \item Under the Kalman-consensus filter (KCF, see \cite{saber09kalman-consensus-optimality-stability}) for distributed estimation, we design a novel attack scheme that steers the estimates in all estimators towards a target value, while respecting a constraint on the attack detection probability under the popular $\chi^2$ detector adapted to the distributed setting. The attack scheme is reminiscent of the popular linear attack scheme \cite{guo2017optimal}, but the novelty lies in online learning and optimization  of the parameters in the attack algorithm via Karush-Kuhn-Tucker (KKT) conditions, multi-timescale stochastic approximation \cite{borkar08stochastic-approximation-book} and  simultaneous perturbation stochastic approximation (SPSA  \cite{spall92original-SPSA}). The attack algorithm, unlike the linear attack scheme of  \cite{guo2017optimal}, uses a non-zero mean Gaussian perturbation to modify the observation made at a node, and this non-zero mean is an affine function of the process estimate at a node. The optimization problem is cast as an online  optimization problem, where KKT conditions are used for finding the optimal attack scheme, and, alternatively, SPSA is used for online stochastic gradient descent based learning of attack parameters (see \cite[Chapter~$3$]{hazan2016introduction}). These works are also extended to the case where the attacker has access to the FDI alarm at each node.
    \item The constraint on attack detection probability is met by updating a Lagrange multiplier via stochastic approximation at a slower timescale.
    \item The dynamics of the deviation of the estimates from the target is derived analytically, which is used later to formulate the online optimization problem.
    \item Theoretical convergence results are proved for all attack design schemes proposed in this paper.
    \item Though the proposed algorithm involves on-line parameter learning, it can be used off-line to optimize the attack parameters which can then be used in real CPS.
\end{enumerate}
 
The rest of the paper is organized as follows. System model and the necessary background related to the problem are provided in Section~\ref{section:model}. Error dynamics expressions under FDI are calculated in Section~\ref{section:error-dynamics-under-attack}. Attack design algorithms  are developed in Section~\ref{section:attack-design-via-KKT} via KKT conditions, and in Section~\ref{section:attack-design-SPSA} via SPSA. Numerical results are presented in Section~\ref{section:numerical-work}, followed by the conclusions in Section~\ref{section:conclusion}. All proofs are provided in the appendices.

 \begin{figure}[t!]
 \begin{centering}
 \begin{center}
 \includegraphics[height=6cm, width=7cm]{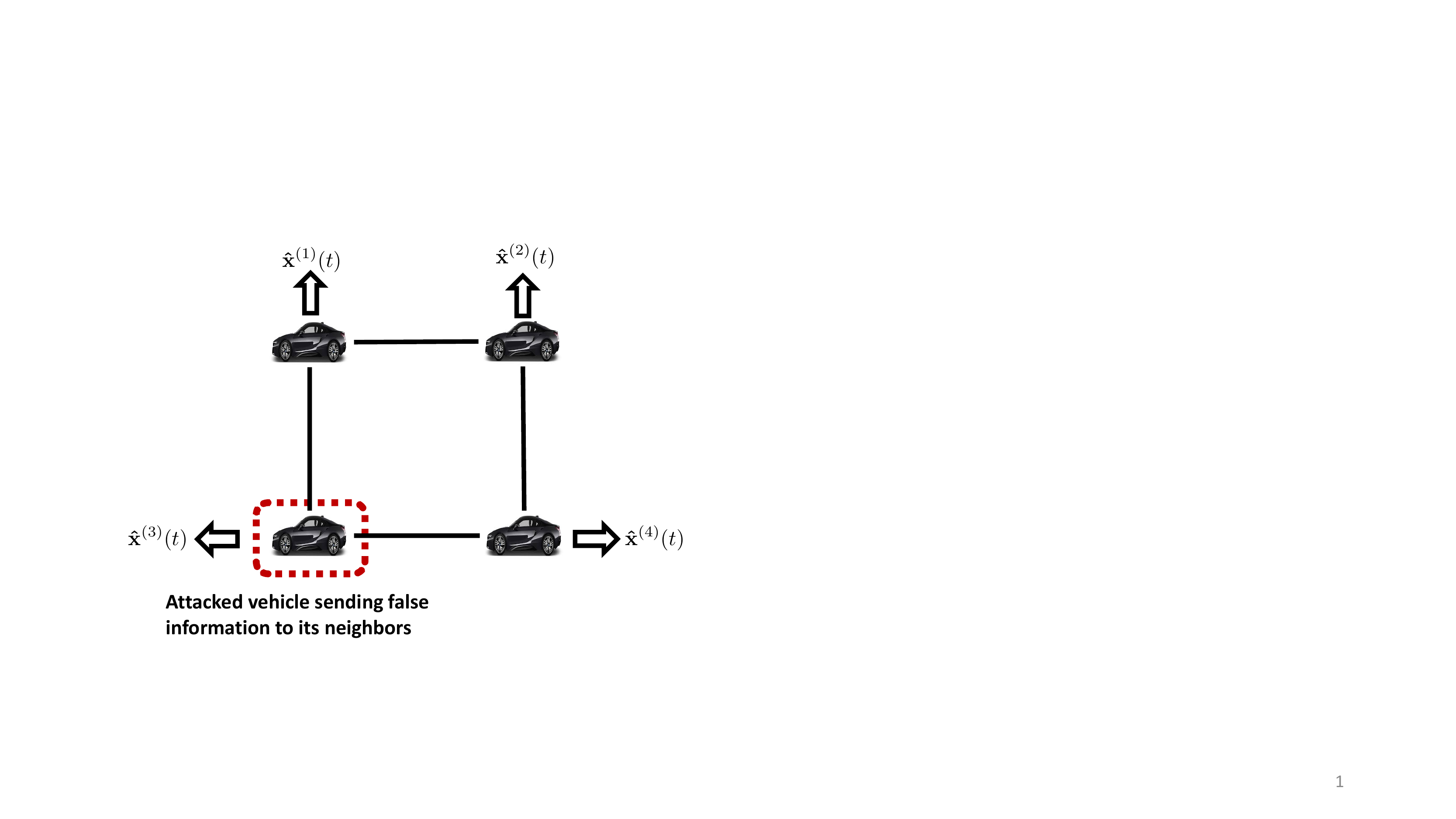}
 \end{center}
 \end{centering}
 \vspace{-5mm}
 \caption{False data injection attack in distributed  estimation.}
 \label{fig:distributed-attack}
 \vspace{-5mm}
 \end{figure}

\section{System Model}\label{section:model}
In this paper, bold capital letters, bold small letters   and capital letters with caligraphic font  will denote matrices, vectors and sets respectively.

\subsection{Sensing and  estimation model: no attack}\label{subsection:sensing-model}
We consider a connected, undirected, multi hop wireless network (see Figure~\ref{fig:distributed-attack}) of $N$ agent nodes denoted by $\mathcal{N}\doteq\{1,2,\cdots,N\}$. The set of neighboring nodes of node~$k$ is denoted by $\mathcal{N}_k$, and let $N_k \doteq |\mathcal{N}_k|$.  There is a discrete-time stochastic process $\{\bm{x}(t)\}_{t \geq 0}$ (where $\bm{x}(t) \in \mathbb{R}^{q \times 1}$ with process dimension $q$) which is a linear process with Gaussian noise evolving as follows:
\begin{equation}\label{eqn:process-equation}
\bm{x}(t+1)=\bm{A} \bm{x}(t)+\underbrace{\bm{w}(t)}_{\sim \mathcal{N}(\bm{0}, \bm{Q})} 
\end{equation}
where $\bm{w}(t)$ is  zero-mean i.i.d. Gaussian noise with covariance matrix $\bm{Q}$, and $\bm{A} \in \mathbb{R}^{q \times q}$ is the process matrix. 

Each agent node is equipped with one or more sensors which make some observation about the process. The  vector observation received at node~$k$ at time~$t$  is given by:
\begin{equation}
\bm{y}_k(t)=\bm{H}_k \bm{x}(t)+\underbrace{\bm{v}_k(t)}_{\sim \mathcal{N}(\bm{0}, \bm{R}_k)},  \label{eqn:observation-equation}
\end{equation}
where $\bm{H}_k$ is an observation matrix of appropriate dimension, and $\bm{v}_k(t)$ is a zero-mean Gaussian observation noise with covariance matrix $\bm{R}_k$, which is independent across sensors and i.i.d. across $t$. The pair $(\bm{A}, \bm{Q}^{\frac{1}{2}})$ is assumed to be stabilizable, and the pair $(\bm{A},\bm{H}_k)$ is assumed to be observable for each $1 \leq  k \leq N$.

At time $t$, each agent node~$k \in \mathcal{N}$ declares an estimate $\bm{\hat{x} }^{(k)}(t)$ using Kalman consensus filtering (KCF, see \cite{saber09kalman-consensus-optimality-stability}) which involves the following sequence of steps:
\begin{enumerate}
\item Node~$k$ computes an intermediate estimate $\bar{\bm{x}}^{(k)}(t)=\bm{A}\hat{\bm{x}}^{(k)}(t-1)$.
\item Node~$k$ broadcasts $\bar{\bm{x}}^{(k)}(t)$ to all $j \in \mathcal{N}_k$. 
\item Node~$k$ computes its final estimate of the process as:
\begin{eqnarray}\label{eqn:KCF-equation}
 \hat{\bm{x}}^{(k)}(t)&=&\bar{\bm{x}}^{(k)}(t)+\bm{G}_k (\bm{y}_k(t)-\bm{H}_k \bar{\bm{x}}^{(k)}(t)) \nonumber\\
 && +\bm{C}_k \sum_{j \in \mathcal{N}_k} (\bar{\bm{x}}^{(j)}(t)-\bar{\bm{x}}^{(k)}(t))
\end{eqnarray}
\end{enumerate}
Here $\bm{G}_k$ and $\bm{C}_k$ are the Kalman and consensus gain matrices used by node~$k$, respectively.

\subsection{The $\chi^2$ detector} \label{subsection:chi-square-detector}
Let us define the innovation vector at node~$k$ by $\bm{z}_k(t):=\bm{y}_k(t)-\bm{H}_k \bm{A} \hat{\bm{x}}^{(k)}(t-1)$.  Let us assume that, under no attack, 
$\{\bm{z}_k(t)\}_{t \geq 0}$ reaches its steady-state distribution 
 $N(\bm{0}, \bm{\Sigma}_k)$. Under a possible attack, a  standard technique (see \cite{guo2017optimal}, \cite{li2017detection}) to detect any anomaly in $\{\bm{z}_t\}_{t \geq 0}$ is the $\chi^2$ detector, which tests whether the innovation vector follows the desired Gaussian distribution. The detector {\em at each agent node} observes the innovation sequence over a pre-specified window of $J$ time-slots, and declares   an attack at time $\tau$   if  
$\sum_{t=\tau-J+1}^{\tau} \bm{z}_k(t)' \bm{\Sigma}_k^{-1} \bm{z}_k(t) \geq \eta$,  
where $\eta$ is a threshold  which can be adjusted to control the false alarm probability. The covariance matrix $\bm{\Sigma}_k$ can be computed from standard results on KCF as in \cite{saber09kalman-consensus-optimality-stability}.

\subsection{False data injection (FDI) attack}\label{subsection:FDI-attack}
At time~$t$, sensors associated to any  subset of nodes $\mathcal{A}_t \subset \mathcal{N}$ can be under attack. A node~$k \in \mathcal{A}_t$ receives an  observation:
\begin{eqnarray}\label{eqn:attack-equation}
\tilde{\bm{y}}_k(t)&=&\bm{y}_k(t)+\bm{e}_k (t)\nonumber\\
&=&\bm{H}_k \bm{x}(t)+\bm{e}_k (t)+\bm{v}_k(t),  
\end{eqnarray}
where $\bm{e}_k(t)$ is the error injected by the attacker. The attacker seeks to insert the error sequence $\{\bm{e}_k(t): k \in \mathcal{A}_t\}_{t \geq 0}$   in order to introduce error in the estimation. 
If $\mathcal{A}_t=\mathcal{A}$ for all $t$, then the attack is called a {\em static attack}, otherwise the attack is called a {\em switching location attack}. {\em We will consider only static attack in this paper, though the theory developed in this paper can be extended to switching location attack.} We assume that the attacker can observe $\hat{\bm{x}}^{(k)}(t)$ for all $1 \leq k \leq N$ once they are computed by the agent nodes. We also assume that the attacker knows the matrices $\bm{A}, \bm{Q}, \{\bm{H}_k\}_{1 \leq k \leq N}, \{\bm{R}_k\}_{1 \leq k \leq N}$.

\subsection{The optimization problem}\label{subsection:optimization-problem}
The attacker seeks to steer the estimate at each agent node as close as possible to some pre-defined value $\bm{x}^*$, while keeping the attack detection probability per unit time under some constraint value $\alpha$. The authors of  \cite{guo2017optimal}  proposed a linear injection attack to fool the $\chi^2$ detector in a centralized, remote estimation setting. Motivated by  \cite{guo2017optimal}, we also propose a linear attack, where,  at time $t$, the    sensor(s) associated with any node~$k \in \mathcal{A}$   modifies the innovation vector as $\bm{\tilde{z}_k}(t)=\bm{T}_k \bm{z}_k(t)+\bm{b}_k(t)$, where $\bm{T}_k$ is a square matrix and $\bm{b}_k(t) \sim N (\bm{\mu}_k( \bm{\theta}^{(k)}(t-1)),\bm{S}_k)$ is  independent Gaussian with its mean taken as a function of $\bm{\theta}^{(k)}(t-1) \doteq \hat{\bm{x}}^{(k)}(t-1)-\bm{x}^*$. The bias term $\bm{\mu}_k(\bm{\theta}^{(k)}(t-1))$ is assumed to take a linear form $\bm{\mu}_k(\bm{\theta}^{(k)}(t-1))=\bm{M}_k \bm{\theta}^{(k)}(t-1)+\bm{d}_k$ for suitable matrix and vector $\bm{M}_k$ and $\bm{d}_k$. This is equivalent to modifying the observation vector to $\bm{\tilde{y}_k}(t)$. If $\{\bm{T}_k, \bm{S}_k, \bm{M}_k, \bm{d}_k\}_{1 \leq k \leq N}$ is constant over time~$t$, the attack is called stationary, else non-stationary.

Note that, the probability of attack detection per unit time slot under the $\chi^2$ detector can be upper bounded as:

\footnotesize
\begin{eqnarray}\label{eqn:Markov-bound}
  P_d &=&  \limsup_{T \rightarrow \infty} \frac{1}{T+1} \sum_{\tau=0}^T  \mathbb{P} \bigg( \cup_{k=1}^N \{ \sum_{t=\tau-J+1}^{\tau} \tilde{\bm{z}}_k(t)' \bm{\Sigma}_k^{-1} \tilde{\bm{z}}_k(t) \geq \eta \} \bigg) \nonumber\\
  &\leq& \limsup_{T \rightarrow \infty} \frac{1}{T+1} \sum_{\tau=0}^T \sum_{k=1}^N \mathbb{P}(\sum_{t=\tau-J+1}^{\tau} \tilde{\bm{z}}_k(t)' \bm{\Sigma}_k^{-1} \tilde{\bm{z}}_k(t) \geq \eta) \nonumber\\
  &\leq & \limsup_{T \rightarrow \infty} \frac{1}{T+1} \sum_{\tau=0}^T \sum_{k=1}^N \frac{\mathbb{E}(\sum_{t=\tau-J+1}^{\tau} \tilde{\bm{z}}_k(t)' \bm{\Sigma}_k^{-1} \tilde{\bm{z}}_k(t))}{\eta} \nonumber\\
  &= & \frac{J}{\eta} \limsup_{T \rightarrow \infty} \frac{1}{T+1} \sum_{\tau=0}^T \sum_{k=1}^N \mathbb{E} (\tilde{\bm{z}}_k(t)' \bm{\Sigma}_k^{-1} \tilde{\bm{z}}_k(t) )
\end{eqnarray}
\normalsize 

where the two inequalities come from the union bound and the Markov inequality, respectively. Hence, the attacker seeks to solve the following constrained optimization problem:

\small
\begin{align}\label{eqn:constrained-optimization-problem} 
 && \min_{ \{\bm{T}_k, \bm{S_k}, \bm{M}_k, \bm{d}_k \}_{k=1}^N } \limsup_{T \rightarrow \infty} \frac{1}{T+1} \sum_{t=0}^T \sum_{k=1}^N \mathbb{E}||\hat{\bm{x}}^{(k)}(t)-\bm{x}^*||^2    \nonumber\\
 &\text{s.t. }& \limsup_{T \rightarrow \infty} \frac{1}{T+1} \sum_{t=0}^T \sum_{k=1}^N \mathbb{E} ( \tilde{\bm{z}}_k(t)' \bm{\Sigma}_k^{-1} \tilde{\bm{z}}_k(t) ) \leq \frac{\alpha \eta}{J} \tag{CP}
\end{align}
\normalsize

This problem can be relaxed by a Lagrange multiplier $\lambda$ to obtain the following unconstrained optimization problem:
\small
\begin{align}\label{eqn:unconstrained-optimization-problem}  
 \min_{ \{\bm{T}_k, \bm{S}_k,  \bm{M}_k, \bm{d}_k\}_{k=1}^N } && \limsup_{T \rightarrow \infty} \frac{1}{T+1} \sum_{t=0}^T \sum_{k=1}^N \mathbb{E}(||\hat{\bm{x}}^{(k)}(t)-\bm{x}^*||^2  \nonumber\\
 && +\lambda \tilde{\bm{z}}_k(t)' \bm{\Sigma}_k^{-1} \tilde{\bm{z}}_k(t) ) \tag{UP}
\end{align}
\normalsize

The following standard result tells us how to choose $\lambda$.
\begin{proposition}\label{proposition:choice-of-lambda}
 Let us consider \eqref{eqn:constrained-optimization-problem} and its relaxed version \eqref{eqn:unconstrained-optimization-problem}. If there exists a $\lambda^* \geq 0$ and matrices $\{\bm{T}_k^*, \bm{S}_k^*, \bm{M}_k^*, \bm{d}_k^*\}_{k=1}^N$ such that (i) $\{\bm{T}_k^*, \bm{S}_k^*, \bm{M}_k^*, \bm{d}_k^*\}_{k=1}^N$ is the optimal solution of \eqref{eqn:unconstrained-optimization-problem} under $\lambda=\lambda^*$, and 
 (ii) the tuple $\{\bm{T}_k^*, \bm{S}_k^*, \bm{M}_k^*, \bm{d}_k^*\}_{k=1}^N$ satisfies the constraint in \eqref{eqn:constrained-optimization-problem} with equality, then $\{\bm{T}_k^*, \bm{S}_k^*, \bm{M}_k^*, \bm{d}_k^*\}_{k=1}^N$ is an optimal solution for \eqref{eqn:constrained-optimization-problem} as well.
\end{proposition}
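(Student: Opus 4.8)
The plan is to establish the result by a standard weak-duality argument combined with complementary slackness, exploiting only the nonnegativity of $\lambda^*$ and the two hypotheses (i)--(ii); no convexity, differentiability, or strong duality is needed. To fix notation, write $f(\bm{\Theta})$ for the objective of \eqref{eqn:constrained-optimization-problem} and $g(\bm{\Theta})$ for the left-hand side of its constraint, where $\bm{\Theta} \doteq \{\bm{T}_k,\bm{S}_k,\bm{M}_k,\bm{d}_k\}_{k=1}^N$, and set $c \doteq \alpha\eta/J$. Then \eqref{eqn:constrained-optimization-problem} reads $\min_{\bm{\Theta}} f(\bm{\Theta})$ subject to $g(\bm{\Theta}) \leq c$, while \eqref{eqn:unconstrained-optimization-problem} at multiplier $\lambda$ reads $\min_{\bm{\Theta}} \big(f(\bm{\Theta}) + \lambda g(\bm{\Theta})\big)$. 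Here $f$ and $g$ are treated simply as the (well-defined) long-run-average values as functions of the parameter tuple; the internal $\limsup$ structure plays no role in the argument.

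First I would fix an arbitrary tuple $\bm{\Theta}$ that is feasible for \eqref{eqn:constrained-optimization-problem}, i.e. with $g(\bm{\Theta}) \leq c$, and compare $f(\bm{\Theta}^*)$ against $f(\bm{\Theta})$. Using hypothesis (ii), complementary slackness gives $g(\bm{\Theta}^*) = c$, so I can write $f(\bm{\Theta}^*) = \big(f(\bm{\Theta}^*) + \lambda^* g(\bm{\Theta}^*)\big) - \lambda^* c$. By hypothesis (i), $\bm{\Theta}^*$ minimizes $f + \lambda^* g$ over all tuples, so in particular $f(\bm{\Theta}^*) + \lambda^* g(\bm{\Theta}^*) \leq f(\bm{\Theta}) + \lambda^* g(\bm{\Theta})$. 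Combining these two relations yields $f(\bm{\Theta}^*) \leq f(\bm{\Theta}) + \lambda^*\big(g(\bm{\Theta}) - c\big)$.

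The key final step is to discard the extra term: since $\lambda^* \geq 0$ and $g(\bm{\Theta}) - c \leq 0$ by feasibility of $\bm{\Theta}$, the product $\lambda^*\big(g(\bm{\Theta})-c\big)$ is nonpositive, whence $f(\bm{\Theta}^*) \leq f(\bm{\Theta})$. As $\bm{\Theta}$ was an arbitrary feasible point, and $\bm{\Theta}^*$ is itself feasible for \eqref{eqn:constrained-optimization-problem} (by hypothesis (ii) it meets the constraint with equality, hence a fortiori with $\leq$), this shows $\bm{\Theta}^*$ attains the minimum of \eqref{eqn:constrained-optimization-problem}, establishing its optimality.

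I expect essentially no serious obstacle, as the argument is purely algebraic. The only points needing care are: (a) verifying that $\bm{\Theta}^*$ is actually feasible for \eqref{eqn:constrained-optimization-problem} before declaring it optimal, which is precisely what hypothesis (ii) supplies; and (b) noting that the additive constant $-\lambda^* c$ absent from \eqref{eqn:unconstrained-optimization-problem} does not alter the set of minimizers, so hypothesis (i) may be invoked interchangeably for $f + \lambda^* g$ and for the true Lagrangian $f + \lambda^*(g-c)$. Both are immediate, so the proof is short.
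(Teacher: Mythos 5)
Your proof is correct, and it is exactly the standard Lagrangian-sufficiency (weak duality plus complementary slackness) argument that the paper itself invokes implicitly: the paper labels this a ``standard result'' and gives no proof of its own. The one caveat you already flagged—treating the objective and constraint as well-defined long-run averages so that the Lagrangian in \eqref{eqn:unconstrained-optimization-problem} splits as $f+\lambda^* g$ (which a bare $\limsup$ does not guarantee in general, but which holds once the attacked system reaches steady state)—is glossed over by the paper in the same way, so your treatment matches its level of rigor.
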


Proposition~\ref{proposition:choice-of-lambda} says that, if we choose an appropriate value for $\lambda^*$ and solve \eqref{eqn:unconstrained-optimization-problem},  we will obtain an optimal solution to \eqref{eqn:constrained-optimization-problem}. In this section, we provide an on-line learning algorithm to find $(\{\bm{T}_k^*, \bm{S}_k^*, \bm{M}_k^*, \bm{d}_k^*\}_{k=1}^N, \lambda^*)$. However, we will first analytically characterize the dynamics of the deviation 
$(\hat{\bm{x}}^{(k)}(t)-\bm{x}^*)$ in presence of linear attack, which will be used in developing the attack design algorithm later.

\section{Error dynamics under attack}\label{section:error-dynamics-under-attack}
Let us consider an algorithm that maintains  iterates $\{\bm{T}_k(t), \bm{U}_k(t), \bm{M}_k(t), \bm{d}_k(t)\}_{1 \leq k \leq N}$ and $\lambda(t)$ for $\{\bm{T}_k, \bm{U}_k, \bm{M}_k, \bm{d}_k\}_{1 \leq k \leq N}$ and $\lambda$, where $\bm{U}_k' \bm{U}_k \doteq \bm{S}_k$. Since it is difficult to maintain $\bm{S}_k(t)$ positive definite in an iterative algorithm, we choose to iteratively update $\bm{U}_k(t)$ and set $\bm{S}_k(t)=\bm{U}_k'(t) \bm{U}_k(t)$. 

Let us define the sigma algebra:
\begin{eqnarray}
 \mathcal{F}_{\tau} &\doteq& \sigma (\{\hat{\bm{x}}^{(k)}(t), \bm{y}_k(t), \bm{T}_k(t), \bm{U}_k(t), \bm{M}_k(t), \bm{d}_k(t),  \nonumber\\
&&\bm{b}_k(t), \lambda(t)\}_{1 \leq k \leq N}, \lambda(t): 1 \leq  t \leq \tau  )
\end{eqnarray}
This is the information  available to the attacker at time $(\tau+1)$ before   a new attack. 
However, let us assume for the sake of analysis that the attacker uses constant $\bm{T}_k, \bm{M}_k, \bm{d}_k, \bm{U}_k$ respectively, for all $k \in \{1,2,\cdots,N\}$.

 Let  $\tilde{\bm{\phi}}(t) \doteq (\hat{\bm{x}}(t)-\bm{x}(t))$, where 
$\hat{\bm{x}}(t) \doteq \mathbb{E}(\bm{x}(t)|\{ \bm{y}_k(\tau)\}_{1 \leq k \leq N, \tau \leq t})=\mathbb{E}(\bm{x}(t)|\mathcal{F}_{t})$   is the MMSE estimate of $\bm{x}(t)$ under no attack and  can be computed by the attacker using a standard Kalman filter. Clearly, $\tilde{\bm{\phi}}(t) \sim \mathcal{N}(\bm{0}, \bm{R}(t))$ where $\bm{R}(t)$ can be computed by a standard Kalman filter. Hence, given $\mathcal{F}_{t}$, $\bm{x}(t) \sim \mathcal{N}(\hat{\bm{x}}(t), \bm{R}(t))$.  Also, conditioned on $\mathcal{F}_t$, the distribution of $\bm{\phi}(t) \doteq (\bm{x}(t)-\bm{x}^*)$ is $\mathcal{N}(\hat{\bm{x}}(t)-\bm{x}^*, \bm{R}(t))$. Note that, these quantities can be computed by the attacker via a standard Kalman filter. 

Let us also recall that $\bm{\theta}^{(k)}(t) \doteq \hat{\bm{x}}^{(k)}(t)-\bm{x}^*$. 

\begin{theorem}\label{theorem:theta-and-z-evolution}
Under a constant $\{\bm{T}_k, \bm{M}_k, \bm{d}_k, \bm{U}_k\}_{1 \leq k \leq N}$, the quantity  $\mathbb{E}(||\bm{\theta}^{(k)}(t)||^2| \mathcal{F}_{t-1})$  can be expressed as  \eqref{eqn:variance-of-theta} and $\mathbb{E}(\tilde{\bm{z}}_k(t)' \bm{\Sigma}_k^{-1} \tilde{\bm{z}}_k(t)|\mathcal{F}_{t-1})$ can be expresed by   \eqref{eqn:variance-of-z}.
\end{theorem}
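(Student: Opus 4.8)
The plan is to show that, conditioned on $\mathcal{F}_{t-1}$, both $\tilde{\bm{z}}_k(t)$ and $\bm{\theta}^{(k)}(t)$ are affine images of a fixed collection of mutually independent Gaussian vectors, hence are themselves conditionally Gaussian; the two target expectations are then quadratic forms of Gaussians, which I evaluate in closed form via the identity $\mathbb{E}(\bm{\xi}'\bm{W}\bm{\xi})=\bm{m}'\bm{W}\bm{m}+\mathrm{tr}(\bm{W}\bm{\Psi})$ valid for $\bm{\xi}\sim\mathcal{N}(\bm{m},\bm{\Psi})$.

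First I would substitute the attacked innovation into the KCF recursion \eqref{eqn:KCF-equation}, replacing the Kalman correction $\bm{G}_k(\bm{y}_k(t)-\bm{H}_k\bar{\bm{x}}^{(k)}(t))=\bm{G}_k\bm{z}_k(t)$ by $\bm{G}_k\tilde{\bm{z}}_k(t)$, which gives $\bm{\theta}^{(k)}(t)=\bm{c}_k(t)+\bm{G}_k\tilde{\bm{z}}_k(t)$ with $\bm{c}_k(t)\doteq\bar{\bm{x}}^{(k)}(t)+\bm{C}_k\sum_{j\in\mathcal{N}_k}(\bar{\bm{x}}^{(j)}(t)-\bar{\bm{x}}^{(k)}(t))-\bm{x}^*$. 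Since every $\bar{\bm{x}}^{(j)}(t)=\bm{A}\hat{\bm{x}}^{(j)}(t-1)$ is known by time $t-1$, the term $\bm{c}_k(t)$ is $\mathcal{F}_{t-1}$-measurable. The key algebraic step is to rewrite the innovation as $\bm{z}_k(t)=\bm{H}_k(\bm{x}(t)-\bar{\bm{x}}^{(k)}(t))+\bm{v}_k(t)$, so that $\tilde{\bm{z}}_k(t)=\bm{T}_k\bm{H}_k\bm{x}(t)+\bm{T}_k\bm{v}_k(t)+\bm{b}_k(t)-\bm{T}_k\bm{H}_k\bar{\bm{x}}^{(k)}(t)$, exposing the three independent sources of randomness $\bm{x}(t)$, $\bm{v}_k(t)$, $\bm{b}_k(t)$.

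Next I would assemble the conditional law of $\tilde{\bm{z}}_k(t)$. Given $\mathcal{F}_{t-1}$, the prediction step of the attacker's Kalman filter gives $\bm{x}(t)\sim\mathcal{N}(\bm{A}\hat{\bm{x}}(t-1),\bm{P}(t))$ with $\bm{P}(t)=\bm{A}\bm{R}(t-1)\bm{A}'+\bm{Q}$, while $\bm{v}_k(t)\sim\mathcal{N}(\bm{0},\bm{R}_k)$ and $\bm{b}_k(t)\sim\mathcal{N}(\bm{M}_k\bm{\theta}^{(k)}(t-1)+\bm{d}_k,\bm{S}_k)$ are independent of $\bm{x}(t)$ and of each other. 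Collecting mean and covariance yields $\tilde{\bm{z}}_k(t)\mid\mathcal{F}_{t-1}\sim\mathcal{N}(\bm{m}_k(t),\bm{\Psi}_k(t))$ with $\bm{m}_k(t)=\bm{T}_k\bm{H}_k\bm{A}(\hat{\bm{x}}(t-1)-\hat{\bm{x}}^{(k)}(t-1))+\bm{M}_k\bm{\theta}^{(k)}(t-1)+\bm{d}_k$ and $\bm{\Psi}_k(t)=\bm{T}_k(\bm{H}_k\bm{P}(t)\bm{H}_k'+\bm{R}_k)\bm{T}_k'+\bm{S}_k$. Applying the quadratic-form identity with $\bm{W}=\bm{\Sigma}_k^{-1}$ gives $\mathbb{E}(\tilde{\bm{z}}_k(t)'\bm{\Sigma}_k^{-1}\tilde{\bm{z}}_k(t)\mid\mathcal{F}_{t-1})=\bm{m}_k(t)'\bm{\Sigma}_k^{-1}\bm{m}_k(t)+\mathrm{tr}(\bm{\Sigma}_k^{-1}\bm{\Psi}_k(t))$, which is \eqref{eqn:variance-of-z}. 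For $\bm{\theta}^{(k)}(t)=\bm{c}_k(t)+\bm{G}_k\tilde{\bm{z}}_k(t)$ the conditional mean is $\bm{c}_k(t)+\bm{G}_k\bm{m}_k(t)$ and the conditional covariance is $\bm{G}_k\bm{\Psi}_k(t)\bm{G}_k'$; taking $\bm{W}=\bm{I}$ gives $\mathbb{E}(||\bm{\theta}^{(k)}(t)||^2\mid\mathcal{F}_{t-1})=||\bm{c}_k(t)+\bm{G}_k\bm{m}_k(t)||^2+\mathrm{tr}(\bm{G}_k\bm{\Psi}_k(t)\bm{G}_k')$, which is \eqref{eqn:variance-of-theta}.

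The main obstacle I anticipate is bookkeeping rather than any deep difficulty: I must be careful about the measurability/timing split — distinguishing the one-step predictive covariance $\bm{P}(t)=\bm{A}\bm{R}(t-1)\bm{A}'+\bm{Q}$ (the correct object when conditioning on $\mathcal{F}_{t-1}$) from the filtered covariance $\bm{R}(t)$ (which corresponds to conditioning on $\mathcal{F}_t$ and must not be used here), verifying that $\bm{v}_k(t)$ and $\bm{w}(t-1)$ are genuinely independent of $\mathcal{F}_{t-1}$ so that all cross-covariances vanish, and confirming that every consensus and prediction term entering $\bm{c}_k(t)$ and $\bm{m}_k(t)$ is $\mathcal{F}_{t-1}$-measurable. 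Once the conditional Gaussianity and the correct mean/covariance pair are secured, everything else reduces to the single quadratic-form identity.
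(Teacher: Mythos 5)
Your proposal is correct, and it reaches exactly the same conditional law as the paper: given $\mathcal{F}_{t-1}$, the attacked innovation is Gaussian with mean $\bm{T}_k\bm{H}_k\bm{A}(\hat{\bm{x}}(t-1)-\hat{\bm{x}}^{(k)}(t-1))+\bm{M}_k\bm{\theta}^{(k)}(t-1)+\bm{d}_k$ and covariance $\bm{T}_k\bm{H}_k\bm{Q}\bm{H}_k'\bm{T}_k'+\bm{T}_k\bm{R}_k\bm{T}_k'+\bm{T}_k\bm{H}_k\bm{A}\bm{R}(t-1)\bm{A}'\bm{H}_k'\bm{T}_k'+\bm{S}_k$, which is precisely what the paper's proof establishes before invoking the same quadratic-form identity. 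The organizational difference is that the paper never introduces your $\bm{c}_k(t)+\bm{G}_k\tilde{\bm{z}}_k(t)$ unification: it expands $\bm{\theta}^{(k)}(t)$ directly into the recursion \eqref{eqn:theta-evolution} in terms of $\{\bm{\theta}^{(j)}(t-1)\}$, $\bm{\phi}(t-1)=\bm{x}(t-1)-\bm{x}^*$, $\bm{w}(t-1)$, $\bm{v}_k(t)$ and $\bm{b}_k(t)$, and then computes second moments keeping $\bm{\phi}(t-1)$ as a separate non-centered Gaussian, which is why \eqref{eqn:variance-of-theta} displays explicit cross terms $2(\cdot)'\bm{G}_k\bm{T}_k\bm{H}_k\bm{A}\,\mathbb{E}(\bm{\phi}(t-1)|\mathcal{F}_{t-1})$ and a second-moment term, rather than your completed-square form. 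Your route buys a cleaner argument — \eqref{eqn:variance-of-z} comes first, and \eqref{eqn:variance-of-theta} follows for free since $\bm{\theta}^{(k)}(t)$ is an affine image of the same conditionally Gaussian vector, so only one mean/covariance pair has to be computed — at the cost of one extra routine step you should make explicit: to claim your expressions "are" \eqref{eqn:variance-of-theta} and \eqref{eqn:variance-of-z}, expand $\|\bm{c}_k(t)+\bm{G}_k\bm{m}_k(t)\|^2$ and check that $\bm{c}_k(t)+\bm{G}_k\bm{m}_k(t)$ equals the paper's deterministic part plus $\bm{G}_k\bm{T}_k\bm{H}_k\bm{A}(\hat{\bm{x}}(t-1)-\bm{x}^*)$ (it does, after writing $\hat{\bm{x}}^{(j)}(t-1)=\bm{\theta}^{(j)}(t-1)+\bm{x}^*$), so that the cross terms and the $\bar{\bm{\phi}}\bar{\bm{\phi}}'$ trace term of \eqref{eqn:variance-of-theta} are exactly the cross terms of your square. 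Your anticipated pitfalls (predictive versus filtered covariance, independence of $\bm{w}(t-1)$, $\bm{v}_k(t)$, $\bm{b}_k(t)$ from $\mathcal{F}_{t-1}$, and $\mathcal{F}_{t-1}$-measurability of the consensus terms) are handled consistently with the paper.
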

\begin{proof}
See Appendix~\ref{appendix:proof-of-theta-and-z-evolution}.
\end{proof}

\begin{figure*}[t!]
 
  \footnotesize
 \begin{eqnarray}\label{eqn:variance-of-theta}
  \mathbb{E}(||\bm{\theta}^{(k)}(t)||^2| \mathcal{F}_{t-1})
  &=& ||(\bm{A}-\bm{G}_k \bm{T}_k \bm{H}_k \bm{A}-N_k \bm{C}_k \bm{A})\bm{\theta}^{(k)}(t-1)+ \bm{C}_k \bm{A} \sum_{j \in \mathcal{N}_k} \bm{\theta}^{(j)}(t-1)-(\bm{I}-\bm{A})\bm{x}^*+ \bm{G}_k (\bm{M}_k \bm{\theta}^{(k)}(t-1)+\bm{d}_k)||^2 \nonumber\\
  &&+ \mbox{Tr} ( \bm{G}_k \bm{T}_k \bm{H}_k \bm{Q} \bm{H}_k' \bm{T}_k' \bm{G}_k' + \bm{G}_k \bm{S}_k \bm{G}_k' + \bm{G}_k \bm{T}_k \bm{R}_k \bm{T}_k' \bm{G}_k' ) \nonumber\\
  && + 2 \bigg((\bm{A}-\bm{G}_k \bm{T}_k \bm{H}_k \bm{A}-N_k \bm{C}_k \bm{A})\bm{\theta}^{(k)}(t-1)+ \bm{C}_k \bm{A} \sum_{j \in \mathcal{N}_k} \bm{\theta}^{(j)}(t-1)-(\bm{I}-\bm{A})\bm{x}^* + \bm{G}_k (\bm{M}_k \bm{\theta}^{(k)}(t-1)+\bm{d}_k)) \bigg)' \nonumber\\
  && \bm{G}_k \bm{T}_k \bm{H}_k \bm{A} 
  \underbrace{\mathbb{E}( \bm{\phi}(t-1) |\mathcal{F}_{t-1})}_{=\hat{\bm{x}}(t-1)-\bm{x}^*}  + \underbrace{ \mathbb{E}(|| \bm{G}_k \bm{T}_k \bm{H}_k \bm{A}\bm{\phi}(t-1)||^2 |\mathcal{F}_{t-1}) }_{=\mbox{Tr}\bigg( \bm{G}_k \bm{T}_k \bm{H}_k \bm{A} \bigg(\bm{R}(t-1)+(\hat{\bm{x}}(t-1)-\bm{x}^*)(\hat{\bm{x}}(t-1)-\bm{x}^*)' \bigg) \bm{A}'\bm{H}_k' \bm{T}_k' \bm{G}_k' \bigg)}
  \end{eqnarray}
 
  \begin{eqnarray}\label{eqn:variance-of-z}
   \mathbb{E}(\tilde{\bm{z}}_k(t)' \bm{\Sigma}_k^{-1} \tilde{\bm{z}}_k(t) | \mathcal{F}_{t-1}) 
  &=& \mbox{Tr} \bigg( \bm{\Sigma}_k^{-\frac{1}{2}} \bigg( \bm{T}_k \bm{H}_k \bm{Q} \bm{H}_k' \bm{T}_k'+ \bm{T}_k \bm{R}_k \bm{T}_k' + \bm{S}_k + \bm{T}_k \bm{H}_k \bm{A} \bm{R}(t-1) \bm{A}' \bm{H}_k' \bm{T}_k' \nonumber\\
  &&  + [\bm{T}_k \bm{H}_k \bm{A} \hat{\bm{x}}(t-1)-\bm{T}_k \bm{H}_k \bm{A} \hat{\bm{x}}^{(k)}(t-1)+\bm{M}_k \bm{\theta}^{(k)}(t-1)+\bm{d}_k ] \nonumber\\
  && [\bm{T}_k \bm{H}_k \bm{A} \hat{\bm{x}}(t-1)-\bm{T}_k \bm{H}_k \bm{A} \hat{\bm{x}}^{(k)}(t-1)+\bm{M}_k \bm{\theta}^{(k)}(t-1)+\bm{d}_k ]' \bigg) \bm{\Sigma}_k^{-\frac{1}{2}} \bigg)
 \end{eqnarray}
 \normalsize
\hrule
\end{figure*}

Note that, given $\{\bm{\theta}^{(k)}(t-1): 1 \leq k \leq N\}$, the function $\sum_{k=1}^N \mathbb{E}(||\bm{\theta}^{(k)}(t)||^2| \mathcal{F}_{t-1} )$ and $\sum_{k=1}^N \mathbb{E}(\tilde{\bm{z}}_k(t)' \bm{\Sigma}_k^{-1} \tilde{\bm{z}}_k(t)|\mathcal{F}_{t-1})$ are quadratic in $\{\bm{T}_k, \bm{U}_k, \bm{M}_k, \bm{d}_k\}_{1 \leq k \leq N}$. 
Hence,  the function 

\footnotesize
\begin{eqnarray}\label{eqn:definition-of-f}
 && f_t(\{\bm{T}_k, \bm{U}_k, \bm{M}_k, \bm{d}_k\}_{1 \leq k \leq N}, \lambda) \nonumber\\
 & \doteq & \sum_{k=1}^N \mathbb{E}(||\bm{\theta}^{(k)}(t)||^2 + \lambda \tilde{\bm{z}}_k(t)' \bm{\Sigma}_k^{-1} \tilde{\bm{z}}_k(t)| \mathcal{F}_{t-1} )
\end{eqnarray}
\normalsize

is also quadratic in $\{\bm{T}_k, \bm{U}_k, \bm{M}_k, \bm{d}_k\}_{1 \leq k \leq N}$. 
In case of  non-stationary  attack, these   results will hold w.r.t. $\{\bm{T}_k(t), \bm{U}_k(t), \bm{M}_k(t), \bm{d}_k(t)\}_{1 \leq k \leq N}$.

\begin{lemma}\label{lemma:convexity}
The function $\mathbb{E}( \tilde{\bm{z}}_k(t)' \bm{\Sigma}_k^{-1} \tilde{\bm{z}}_k(t)| \mathcal{F}_{t-1} )$ is convex in $\{ \bm{T}_k, \bm{U}_k, \bm{M}_k, \bm{d}_k\}_{1 \leq k \leq N}$. 
For fixed $\{\bm{T}_k \}_{1 \leq k \leq N}$, the functions $\mathbb{E}(||\bm{\theta}^{(k)}(t)||^2 | \mathcal{F}_{t-1} )$ and   $f_t(\{\bm{T}_k, \bm{U}_k, \bm{M}_k, \bm{d}_k\}_{1 \leq k \leq N}, \lambda)$ are convex in $\{ \bm{U}_k, \bm{M}_k, \bm{d}_k\}_{1 \leq k \leq N}$.  
\end{lemma}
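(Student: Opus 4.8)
The plan is to reduce both claims to two elementary convexity facts and then decompose the expressions \eqref{eqn:variance-of-theta} and \eqref{eqn:variance-of-z} term by term. The first fact is that whenever $\bm{P}\succeq 0$ and $\bm{M}\succeq 0$, the map $\bm{X}\mapsto \mbox{Tr}(\bm{P}\bm{X}\bm{M}\bm{X}')$ is convex in the matrix $\bm{X}$, since it equals $||\bm{P}^{1/2}\bm{X}\bm{M}^{1/2}||_F^2$, the squared Frobenius norm of a linear image of $\bm{X}$, hence a convex quadratic. The second fact is the standard one that a convex function precomposed with an affine map stays convex, and that a nonnegative combination of convex functions is convex. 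Throughout I will use that $\bm{\Sigma}_k^{-1}\succ 0$ and that $\bm{H}_k\bm{Q}\bm{H}_k'$, $\bm{R}_k$, and $\bm{R}(t-1)$ are positive semidefinite, being (images of) covariance matrices.

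For the first claim I will go through \eqref{eqn:variance-of-z} summand by summand, taking the trace against $\bm{\Sigma}_k^{-1}$. The terms $\mbox{Tr}(\bm{\Sigma}_k^{-1}\bm{T}_k\bm{H}_k\bm{Q}\bm{H}_k'\bm{T}_k')$, $\mbox{Tr}(\bm{\Sigma}_k^{-1}\bm{T}_k\bm{R}_k\bm{T}_k')$ and $\mbox{Tr}(\bm{\Sigma}_k^{-1}\bm{T}_k\bm{H}_k\bm{A}\bm{R}(t-1)\bm{A}'\bm{H}_k'\bm{T}_k')$ are each of the form $\mbox{Tr}(\bm{P}\bm{X}\bm{M}\bm{X}')$ with $\bm{X}=\bm{T}_k$, hence convex in $\bm{T}_k$; the term $\mbox{Tr}(\bm{\Sigma}_k^{-1}\bm{S}_k)=\mbox{Tr}(\bm{\Sigma}_k^{-1}\bm{U}_k'\bm{U}_k)=||\bm{U}_k\bm{\Sigma}_k^{-1/2}||_F^2$ is convex in $\bm{U}_k$. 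The crucial summand is the rank-one one: setting $\bm{g}_k\doteq \bm{T}_k\bm{a}_k+\bm{M}_k\bm{\theta}^{(k)}(t-1)+\bm{d}_k$ with $\bm{a}_k\doteq \bm{H}_k\bm{A}(\hat{\bm{x}}(t-1)-\hat{\bm{x}}^{(k)}(t-1))$ an $\mathcal{F}_{t-1}$-measurable constant, this summand equals $\bm{g}_k'\bm{\Sigma}_k^{-1}\bm{g}_k$. The key observation is that $\bm{g}_k$ is \emph{jointly} affine in $(\bm{T}_k,\bm{M}_k,\bm{d}_k)$, since $\bm{T}_k\bm{a}_k$ and $\bm{M}_k\bm{\theta}^{(k)}(t-1)$ are linear in the respective matrices; composing the positive-definite quadratic form $\bm{g}\mapsto \bm{g}'\bm{\Sigma}_k^{-1}\bm{g}$ with this affine map yields joint convexity. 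Summing the convex summands gives convexity of \eqref{eqn:variance-of-z} in $(\bm{T}_k,\bm{U}_k,\bm{M}_k,\bm{d}_k)$.

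For the second claim I fix $\{\bm{T}_k\}$ and examine \eqref{eqn:variance-of-theta}. With $\bm{T}_k$ frozen, the argument of the leading $||\cdot||^2$ becomes affine in $(\bm{M}_k,\bm{d}_k)$ (the only free dependence is through $\bm{G}_k(\bm{M}_k\bm{\theta}^{(k)}(t-1)+\bm{d}_k)$), so that squared norm is convex; the cross term on the third line is linear in the same affine argument, hence affine; the traces quadratic in $\bm{T}_k$ and the last displayed trace become constants; and $\mbox{Tr}(\bm{G}_k\bm{S}_k\bm{G}_k')=||\bm{U}_k\bm{G}_k'||_F^2$ is convex in $\bm{U}_k$. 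Adding these shows $\mathbb{E}(||\bm{\theta}^{(k)}(t)||^2|\mathcal{F}_{t-1})$ is convex in $(\bm{U}_k,\bm{M}_k,\bm{d}_k)$. Finally, since $\lambda\geq 0$, the function $f_t$ of \eqref{eqn:definition-of-f} is, for fixed $\{\bm{T}_k\}$, a nonnegative combination over $k$ of the just-proved convex $\theta$-terms and the restricted $\bm{z}$-terms (convex by the first claim), hence convex in $\{\bm{U}_k,\bm{M}_k,\bm{d}_k\}_{1\leq k\leq N}$.

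The only genuine obstacle is the rank-one term in \eqref{eqn:variance-of-z}: one must resist checking convexity in $\bm{T}_k$, $\bm{M}_k$, $\bm{d}_k$ separately (which would not establish joint convexity) and instead recognize the shared affine structure of $\bm{g}_k$, so that a single composition argument delivers joint convexity at once. Everything else is bookkeeping once the two convexity facts are in place; the role of freezing $\bm{T}_k$ in the second claim is simply that the $\theta$-dynamics contain products $\bm{T}_k\cdots\bm{T}_k'$ that are convex in $\bm{T}_k$ but not jointly with the remaining variables, and fixing $\bm{T}_k$ renders those terms constant.
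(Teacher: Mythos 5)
Your proof is correct and follows essentially the same route as the paper's: decompose \eqref{eqn:variance-of-theta} and \eqref{eqn:variance-of-z} term by term and recognize each summand as a positive semi-definite quadratic form (a squared Euclidean or Frobenius norm) composed with an affine map of the decision variables. The paper's own proof only writes out two representative terms and dismisses the rest as ``similar''; your explicit handling of the rank-one coupling term via the jointly affine vector $\bm{g}_k = \bm{T}_k\bm{a}_k + \bm{M}_k\bm{\theta}^{(k)}(t-1) + \bm{d}_k$ is precisely the detail that the paper leaves implicit, so yours is a more complete rendering of the same argument.
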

\begin{proof}
See Appendix~\ref{appendix:proof-of-convexity}.
\end{proof}

\subsubsection{Stability of $\{\bm{\theta}^{(k)}(t)\}$}
Let us consider constant $\{\bm{T}_k(t), \bm{U}_k(t), \bm{M}_k(t), \bm{d}_k(t)\}_{1 \leq k \leq N}$ over time. Let us define the    matrix $\bm{M}$ consisting of $N^2$ blocks (each block is a square matrix) where:
\begin{itemize}
    \item The $(k,k)$-th block in $\bm{M}$ is $(\bm{A}-\bm{G}_k \bm{T}_k \bm{H}_k \bm{A}-N_k \bm{C}_k \bm{A})$.
    \item For $k \neq j$ and $j \in \mathcal{N}_k$, the $(k,j)$-th  block of $\bm{M}$ is $\bm{C}_k \bm{A}$.
    \item For $k \neq j$ and $j \notin \mathcal{N}_k$, the $(k,j)$-th block of $\bm{M}$ is $\bm{0}$.
\end{itemize}
\begin{lemma}\label{lemma:proof-of-stability-of-theta}
The error dynamics $\{\bm{\theta}^{(k)}(t)\}_{1 \leq k \leq N}$ is stable if the spectral radius of $\bm{M}$ is less than $1$.
\end{lemma}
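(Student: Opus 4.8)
The plan is to show that the stacked collection of per-node deviations obeys a single linear time-invariant recursion whose system matrix is exactly $\bm{M}$, and then to invoke the standard discrete-time criterion that a spectral radius below one forces the unforced dynamics to decay and keeps the forced response bounded. First I would stack the deviations into $\bm{\Theta}(t)\doteq(\bm{\theta}^{(1)}(t)',\dots,\bm{\theta}^{(N)}(t)')'$. Substituting the attacked innovation $\tilde{\bm{z}}_k(t)=\bm{T}_k\bm{z}_k(t)+\bm{b}_k(t)$ into the KCF update \eqref{eqn:KCF-equation}, subtracting $\bm{x}^*$, and using both $\bm{x}(t)=\bm{A}\bm{x}(t-1)+\bm{w}(t-1)$ and $\bm{z}_k(t)=\bm{H}_k\bm{A}(\bm{\phi}(t-1)-\bm{\theta}^{(k)}(t-1))+\bm{H}_k\bm{w}(t-1)+\bm{v}_k(t)$, every term becomes a function of the past deviations $\{\bm{\theta}^{(j)}(t-1)\}$ together with exogenous quantities. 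This is precisely the manipulation already carried out inside the norm of \eqref{eqn:variance-of-theta}, so reading off the coefficients of $\bm{\theta}^{(k)}(t-1)$ and of $\bm{\theta}^{(j)}(t-1)$ for $j\in\mathcal{N}_k$ reproduces the diagonal block $\bm{A}-\bm{G}_k\bm{T}_k\bm{H}_k\bm{A}-N_k\bm{C}_k\bm{A}$, the neighbour blocks $\bm{C}_k\bm{A}$, and zeros elsewhere, i.e. exactly the block structure defining $\bm{M}$.

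Next I would collect all remaining terms into an input $\bm{u}(t)$: the deterministic forcing $-(\bm{I}-\bm{A})\bm{x}^*+\bm{G}_k\bm{d}_k$, the signal contribution $\bm{G}_k\bm{T}_k\bm{H}_k\bm{A}\bm{\phi}(t-1)$, and the zero-mean process, observation, and attack noises whose (conditional) covariances $\bm{G}_k\bm{T}_k\bm{H}_k\bm{Q}\bm{H}_k'\bm{T}_k'\bm{G}_k'$, $\bm{G}_k\bm{T}_k\bm{R}_k\bm{T}_k'\bm{G}_k'$, and $\bm{G}_k\bm{S}_k\bm{G}_k'$ are constant in $t$. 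This yields the recursion $\bm{\Theta}(t)=\bm{M}\bm{\Theta}(t-1)+\bm{u}(t)$. One point needs care here: the mean of the bias $\bm{b}_k$ contributes $\bm{G}_k\bm{M}_k\bm{\theta}^{(k)}(t-1)$, which is genuine state feedback rather than an exogenous input. To match the $\bm{M}$ of the statement I would either restrict to the regime where this term is absent or negligible (e.g. $\bm{M}_k=\bm{0}$), or equivalently augment each diagonal block of $\bm{M}$ by $\bm{G}_k\bm{M}_k$; the stability argument below is unchanged in either reading.

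Then, given $\rho(\bm{M})<1$, Gelfand's formula supplies a constant $C<\infty$ and a $\beta\in(\rho(\bm{M}),1)$ with $\|\bm{M}^n\|\le C\beta^n$ for all $n$. Unrolling gives $\bm{\Theta}(t)=\bm{M}^t\bm{\Theta}(0)+\sum_{s=1}^{t}\bm{M}^{t-s}\bm{u}(s)$, so the contribution of the initial condition decays geometrically and the forced response is dominated by the convergent geometric convolution $\sum_{s}\beta^{t-s}\|\bm{u}(s)\|$. Consequently the error recursion is asymptotically/BIBO stable: bounded-moment inputs yield a bounded-moment, settling state, and in particular, provided the base process and the Kalman filter driving $\bm{\phi}(t-1)$ have uniformly bounded second moments, $\sup_t\mathbb{E}\|\bm{\Theta}(t)\|^2<\infty$, which is the asserted stability.

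The main obstacle is not the spectral step, which is textbook, but the bookkeeping of the two preceding steps: verifying that collecting coefficients reproduces exactly the claimed block matrix, and deciding how the feedback term $\bm{G}_k\bm{M}_k\bm{\theta}^{(k)}(t-1)$ is accounted for relative to the stated definition of $\bm{M}$. A secondary subtlety is the precise sense of ``stable'', since the input $\bm{u}(t)$ carries the possibly growing process signal $\bm{\phi}(t-1)$: the condition $\rho(\bm{M})<1$ guarantees decay of initial conditions and finite input-to-state gain unconditionally, whereas genuine mean-square boundedness of $\bm{\Theta}(t)$ additionally requires the underlying process moments to remain bounded.
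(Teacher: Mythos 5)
Your proposal is correct and follows essentially the same route as the paper's own (very terse) proof: stack the per-node deviations, read off from \eqref{eqn:theta-evolution} the linear recursion $\bm{\theta}(t)=\bm{M}\bm{\theta}(t-1)+\bm{\zeta}_t$, and conclude stability from the spectral radius of $\bm{M}$ being less than $1$.

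It is worth noting that your bookkeeping is more careful than the paper's on exactly the two points you flag. First, the paper lumps $\bm{G}_k\bm{b}_k(t)$ into the ``exogenous'' term $\bm{\zeta}_t$ without comment, even though the mean of $\bm{b}_k(t)$ is $\bm{M}_k\bm{\theta}^{(k)}(t-1)+\bm{d}_k$, i.e.\ genuine state feedback; strictly speaking the system matrix governing the recursion is $\bm{M}$ with its diagonal blocks augmented by $\bm{G}_k\bm{M}_k$, so the lemma as stated is exact only when $\bm{M}_k=\bm{0}$ (or when one reads $\bm{M}$ as so augmented), which is precisely the repair you propose. Second, the paper asserts that $\bm{\zeta}_t$ is stable ``since $\bm{\phi}(t)$ is a stable process,'' which is an assumption on the underlying dynamics $\bm{x}(t)$ rather than a consequence of the spectral-radius condition---matching your closing caveat that decay of initial conditions and finite input-to-state gain hold unconditionally, while mean-square boundedness of $\bm{\theta}(t)$ additionally needs bounded moments of the forcing. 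Neither refinement alters the core spectral-radius argument, which is the same in both proofs.
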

\begin{proof}
 See Appendix~\ref{appendix:proof-of-stability-of-theta}.
\end{proof}
Clearly, if we choose $\bm{T}_k=\bm{I}$ for $1 \leq k \leq N$, then the $\{\bm{\theta}^{(k)}(t): 1 \leq k \leq N\}_{t \geq 0}$ process remains stable if the estimates at various nodes are stable  under no attack.
\begin{lemma}\label{lemma:stability-of-innovation}
If the spectral radius of $\bm{M}$ is less than $1$, then the $\{\tilde{z}_k(t)\}_{t \geq 0}$ process is also stable for all $1 \leq k \leq N$.
\end{lemma}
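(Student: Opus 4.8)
The plan is to read off mean-square stability of $\{\tilde{\bm{z}}_k(t)\}$ directly from the closed-form expression \eqref{eqn:variance-of-z}, using Lemma~\ref{lemma:proof-of-stability-of-theta} to control the only state-dependent term. I would interpret ``stability'' as uniform boundedness of the second moment, $\sup_t \mathbb{E}\|\tilde{\bm{z}}_k(t)\|^2 < \infty$, consistent with the meaning used for $\{\bm{\theta}^{(k)}(t)\}$. Since $\bm{\Sigma}_k \succ 0$, we have $\mathbb{E}\|\tilde{\bm{z}}_k(t)\|^2 \le \lambda_{\max}(\bm{\Sigma}_k)\,\mathbb{E}\big(\tilde{\bm{z}}_k(t)'\bm{\Sigma}_k^{-1}\tilde{\bm{z}}_k(t)\big)$, so it suffices to bound the expectation of the right-hand side of \eqref{eqn:variance-of-z} uniformly in $t$, after taking an outer expectation over $\mathcal{F}_{t-1}$.

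First I would split the trace in \eqref{eqn:variance-of-z} into a \emph{covariance part} and a \emph{mean part}. The covariance part collects $\bm{T}_k\bm{H}_k\bm{Q}\bm{H}_k'\bm{T}_k' + \bm{T}_k\bm{R}_k\bm{T}_k' + \bm{S}_k + \bm{T}_k\bm{H}_k\bm{A}\bm{R}(t-1)\bm{A}'\bm{H}_k'\bm{T}_k'$; every factor here is a fixed matrix except the Kalman error covariance $\bm{R}(t-1)$, which converges to a finite steady-state limit because $(\bm{A},\bm{Q}^{1/2})$ is stabilizable and $(\bm{A},\bm{H}_k)$ is observable. Hence $\sup_t\|\bm{R}(t-1)\| < \infty$ and the covariance part is uniformly bounded. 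The mean part is the rank-one outer product of $\bm{m}_k(t-1) \doteq \bm{T}_k\bm{H}_k\bm{A}\big(\hat{\bm{x}}(t-1)-\hat{\bm{x}}^{(k)}(t-1)\big) + \bm{M}_k\bm{\theta}^{(k)}(t-1) + \bm{d}_k$ with itself, and its contribution to the trace equals $\|\bm{\Sigma}_k^{-1/2}\bm{m}_k(t-1)\|^2$.

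Next I would bound the mean part. Substituting $\hat{\bm{x}}(t-1)-\hat{\bm{x}}^{(k)}(t-1) = (\hat{\bm{x}}(t-1)-\bm{x}^*) - \bm{\theta}^{(k)}(t-1)$ shows that $\bm{m}_k(t-1)$ is affine in the two vectors $\bm{\theta}^{(k)}(t-1)$ and $\hat{\bm{x}}(t-1)-\bm{x}^*$. Applying $\|a+b+c\|^2 \le 3(\|a\|^2+\|b\|^2+\|c\|^2)$ and taking expectations yields $\mathbb{E}\|\bm{m}_k(t-1)\|^2 \le c_1\,\mathbb{E}\|\bm{\theta}^{(k)}(t-1)\|^2 + c_2\,\mathbb{E}\|\hat{\bm{x}}(t-1)-\bm{x}^*\|^2 + c_3$ for finite constants $c_1,c_2,c_3$ built from the fixed gain and observation matrices. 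Since $\rho(\bm{M})<1$, Lemma~\ref{lemma:proof-of-stability-of-theta} gives $\sup_t\mathbb{E}\|\bm{\theta}^{(k)}(t)\|^2 < \infty$, which disposes of the first term. Combining the bounded covariance part with the bounded mean part gives $\sup_t\mathbb{E}\big(\tilde{\bm{z}}_k(t)'\bm{\Sigma}_k^{-1}\tilde{\bm{z}}_k(t)\big)<\infty$, hence the desired mean-square stability for every $k$.

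The main obstacle is the remaining forcing term $\mathbb{E}\|\hat{\bm{x}}(t-1)-\bm{x}^*\|^2$, i.e.\ the second moment of the centralized no-attack estimate relative to the target. This is precisely the same exogenous input that drives the $\bm{\theta}$-recursion in \eqref{eqn:variance-of-theta} through the term $\bm{G}_k\bm{T}_k\bm{H}_k\bm{A}(\hat{\bm{x}}(t-1)-\bm{x}^*)$, so its uniform boundedness is already implicit in the hypotheses under which Lemma~\ref{lemma:proof-of-stability-of-theta} holds; I would make this dependence explicit and invoke it, noting that $\mathbb{E}\|\hat{\bm{x}}(t-1)\|^2 \le \mathbb{E}\|\bm{x}(t-1)\|^2$ together with the standing assumption that the innovation process reaches a finite steady-state distribution keeps the term bounded. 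Making this step airtight, rather than the routine $\bm{\theta}$ and $\bm{R}(t-1)$ parts, is where care is required.
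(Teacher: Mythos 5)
Your proof is correct, but it takes a genuinely different route from the paper's. The paper's argument is a one-liner at the process level: it writes $\tilde{\bm{z}}_k(t)=\bm{T}_k(\bm{y}_k(t)-\bm{H}_k\bm{A}\hat{\bm{x}}^{(k)}(t-1))+\bm{b}_k(t)$ and concludes stability because each constituent is stable --- the true observation sequence $\{\bm{y}_k(t)\}$, the node estimates $\{\hat{\bm{x}}^{(k)}(t)\}$ (via Lemma~\ref{lemma:proof-of-stability-of-theta}), and the injected perturbation $\{\bm{b}_k(t)\}$ (which the paper loosely calls i.i.d.). You instead work at the level of second moments, reading the bound off the derived formula \eqref{eqn:variance-of-z}: a covariance part controlled by convergence of the Riccati iterates $\bm{R}(t-1)$, and a rank-one mean part controlled by Lemma~\ref{lemma:proof-of-stability-of-theta} plus the MMSE inequality $\mathbb{E}\|\hat{\bm{x}}(t-1)\|^2\le\mathbb{E}\|\bm{x}(t-1)\|^2$. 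What your version buys is precision on two points the paper glosses over: first, $\bm{b}_k(t)$ is \emph{not} i.i.d.\ (its mean $\bm{M}_k\bm{\theta}^{(k)}(t-1)+\bm{d}_k$ is state-dependent), and your mean-part decomposition handles this correctly rather than by fiat; second, you isolate exactly which exogenous quantity must be bounded, namely $\mathbb{E}\|\hat{\bm{x}}(t-1)-\bm{x}^*\|^2$, whereas the paper hides the same requirement inside the unproved claim that ``$\{\bm{y}_k(t)\}$ is stable.'' What the paper's route buys is brevity and independence from the algebra of \eqref{eqn:variance-of-z}. Note that both arguments ultimately rest on the same assumption that is nowhere stated explicitly: boundedness of the second moment of the underlying state process $\bm{x}(t)$ (effectively $\rho(\bm{A})<1$), without which neither the paper's ``stable observations'' claim nor your forcing-term bound holds; your last paragraph flags this honestly, which is the right instinct.
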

\begin{proof}
 We know that $\tilde{\bm{z}}(t)=\bm{T}_k(\bm{y}_k(t)-\bm{H}_k \bm{A} \hat{\bm{x}}^{(k)}(t-1))+\bm{b}_k(t)$. Since the true observation sequence $\{\bm{y}_k(t)\}_{t \geq 0}$ is stable, $\{\bm{b}_k(t)\}_{t \geq 0}$ is i.i.d., and $\{\hat{\bm{x}}^{(k)}(t)\}_{t \geq 0}$ is stable under FDI  (by Lemma~\ref{lemma:proof-of-stability-of-theta}), the proof follows.
\end{proof}

\section{Attack design via direct optimization}\label{section:attack-design-via-KKT}

In this section, we will apply the well-known Karush-Kuhn-Tucker (KKT) conditions to find $\{\bm{T}_k^*,\bm{U}_k^*, \bm{M}_k^*, \bm{d}_k^*\}_{1 \leq k \leq N}$ for designing the attack at time $t$. 

\subsection{KKT based solution: the LAADE-KKT algorithm}
\label{subsection:LAADE-KKT}
Let us consider the modified constrained problem:
\begin{align}\label{eqn:modified-constrained-problem-1}
 && \min_{ \{\bm{T}_k^*,\bm{U}_k^*, \bm{M}_k^*, \bm{d}_k^*\}_{1 \leq k \leq N} } \sum_{k=1}^N \mathbb{E} ( ||\bm{\theta}^{(k)}(t)||^2  | \mathcal{F}_{t-1}) \nonumber\\
 && \textit{s.t.} \sum_{k=1}^N \mathbb{E}  (\tilde{\bm{z}}_k(t)' \bm{\Sigma}_k^{-1} \tilde{\bm{z}}_k(t)| \mathcal{F}_{t-1} ) \leq \frac{\alpha \eta}{J}\tag{MCP1}
\end{align}
Clearly, applying KKT conditions on the relaxed version of this problem, using a Lagrange multiplier $\lambda$, will involve setting the gradient of $f_t(\{\bm{T}_k, \bm{U}_k, \bm{M}_k, \bm{d}_k\}_{1 \leq k \leq N}, \lambda)$ w.r.t. the primal variables $\{\bm{T}_k, \bm{U}_k, \bm{M}_k, \bm{d}_k\}_{1 \leq k \leq N}$ equal to $\bm{0}$. 
However, it turns out that, the function  $f_t(\{\bm{T}_k, \bm{U}_k, \bm{M}_k, \bm{d}_k\}_{1 \leq k \leq N}, \lambda)$ is convex (by Lemma~\ref{lemma:convexity}) but not strictly convex w.r.t. $\{\bm{M}_k,\bm{d}_k\}_{1 \leq k \leq N}$, and that the derivative of this function w.r.t. $\{\bm{M}_k,\bm{d}_k\}_{1 \leq k \leq N}$ is a function of $\{\bm{M}_k \bm{\theta}^{(k)}(t-1)+\bm{d}_k\}_{1 \leq k \leq N}$, which can lead to many possible solutions. Hence, we introduce a regularization term involving   the Frobenius norm of $\{\bm{M}_k\}_{1 \leq k \leq N}$: 

\footnotesize
\begin{align}\label{eqn:modified-constrained-problem}
 && \min_{ \{\bm{T}_k^*,\bm{U}_k^*, \bm{M}_k^*, \bm{d}_k^*\}_{1 \leq k \leq N} } \sum_{k=1}^N \mathbb{E} ( ||\bm{\theta}^{(k)}(t)||^2  | \mathcal{F}_{t-1}) + \xi \sum_{k=1}^N ||\bm{M}_k||_F^2 \nonumber\\
 && \textit{s.t.} \sum_{k=1}^N \mathbb{E}  (\tilde{\bm{z}}_k(t)' \bm{\Sigma}_k^{-1} \tilde{\bm{z}}_k(t)| \mathcal{F}_{t-1} ) \leq \frac{\alpha \eta}{J}\tag{MCP}
\end{align}
\normalsize

where $\xi>0$ is a pre-determined constant. Applying KKT conditions on the relaxed version of \eqref{eqn:modified-constrained-problem}, using a Lagrange multiplier $\lambda$, will involve setting the gradient of $f_t(\{\bm{T}_k, \bm{U}_k, \bm{M}_k, \bm{d}_k\}_{1 \leq k \leq N}, \lambda)+ \xi \sum_{k=1}^N ||\bm{M}_k||_F^2$ w.r.t. the primal variables $\{\bm{T}_k, \bm{U}_k, \bm{M}_k, \bm{d}_k\}_{1 \leq k \leq N}$ equal to $\bm{0}$. 
This yields a set of linear equations \eqref{eqn:derivative-wrt-T}, \eqref{eqn:derivative-wrt-U}, \eqref{eqn:derivative-wrt-M}, \eqref{eqn:derivative-wrt-d} of these primal variables.  
\begin{lemma}
The optimal solution of \eqref{eqn:modified-constrained-problem} yields $\bm{U}_k^*=\bm{0}$ and hence $\bm{S}_k^*=\bm{0}$ for all $1 \leq k \leq N$.
\end{lemma}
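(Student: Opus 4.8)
The plan is to exploit the fact that the pair $(\bm{U}_k,\bm{S}_k)$, with $\bm{S}_k=\bm{U}_k'\bm{U}_k$, enters both the objective and the constraint of \eqref{eqn:modified-constrained-problem} only through two manifestly nonnegative trace terms, so that driving $\bm{S}_k$ to $\bm{0}$ can only help: it weakly decreases the objective while strictly relaxing the constraint. The whole argument is a one-step replacement/monotonicity argument, with no appeal to the KKT stationarity equations needed.

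First I would isolate the dependence on $\bm{S}_k$. Reading off \eqref{eqn:variance-of-theta}, the only appearance of $\bm{S}_k$ in the objective $\sum_{k} \mathbb{E}(||\bm{\theta}^{(k)}(t)||^2|\mathcal{F}_{t-1})+\xi\sum_k ||\bm{M}_k||_F^2$ is the summand $\mbox{Tr}(\bm{G}_k\bm{S}_k\bm{G}_k')$: the squared-norm term, the cross term carrying the factor $2$, and the $\bm{\phi}(t-1)$ term involve only $\bm{T}_k,\bm{M}_k,\bm{d}_k$, and the regularizer involves only $\bm{M}_k$. Likewise, from \eqref{eqn:variance-of-z}, the only $\bm{S}_k$-dependent contribution to the constraint functional is $\mbox{Tr}(\bm{\Sigma}_k^{-\frac{1}{2}}\bm{S}_k\bm{\Sigma}_k^{-\frac{1}{2}})=\mbox{Tr}(\bm{\Sigma}_k^{-1}\bm{S}_k)$; in particular $\bm{S}_k$ does not slip into the bracketed mean term of \eqref{eqn:variance-of-z}, which depends on $\bm{T}_k,\bm{H}_k,\bm{A},\hat{\bm{x}},\bm{M}_k,\bm{\theta}^{(k)},\bm{d}_k$ only. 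Substituting $\bm{S}_k=\bm{U}_k'\bm{U}_k$ rewrites these as $\mbox{Tr}(\bm{G}_k\bm{S}_k\bm{G}_k')=||\bm{U}_k\bm{G}_k'||_F^2\ge 0$ and $\mbox{Tr}(\bm{\Sigma}_k^{-1}\bm{S}_k)=||\bm{U}_k\bm{\Sigma}_k^{-\frac{1}{2}}||_F^2\ge 0$; since $\bm{\Sigma}_k\succ 0$, the matrix $\bm{\Sigma}_k^{-\frac{1}{2}}$ is invertible, so the latter quantity is strictly positive whenever $\bm{U}_k\neq\bm{0}$.

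The second step is the replacement argument. Starting from any feasible tuple $\{\bm{T}_k,\bm{U}_k,\bm{M}_k,\bm{d}_k\}_{k=1}^N$ with some $\bm{U}_k\neq\bm{0}$, I would form a new tuple by setting every $\bm{U}_k$ (hence every $\bm{S}_k$) to $\bm{0}$, leaving $\bm{T}_k,\bm{M}_k,\bm{d}_k$ untouched. By the first step, each $\mbox{Tr}(\bm{G}_k\bm{S}_k\bm{G}_k')$ drops to $0$, so the objective does not increase, and each $\mbox{Tr}(\bm{\Sigma}_k^{-1}\bm{S}_k)$ drops to $0$, so the constraint functional strictly decreases and the new tuple remains feasible. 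Hence the modified tuple is feasible with objective no larger than the original, and the minimum of \eqref{eqn:modified-constrained-problem} is attained at $\bm{U}_k^*=\bm{0}$, i.e. $\bm{S}_k^*=\bm{0}$, for all $1\le k\le N$; if in addition $\bm{G}_k$ has full column rank (so $\mbox{Tr}(\bm{G}_k\bm{S}_k\bm{G}_k')>0$ for $\bm{S}_k\neq\bm{0}$) the objective strictly decreases and $\bm{U}_k^*=\bm{0}$ is the unique optimizer in $\bm{U}_k$. I do not expect any genuine analytic obstacle here: the only place demanding care is the bookkeeping in the first step, namely verifying from the lengthy expressions \eqref{eqn:variance-of-theta}--\eqref{eqn:variance-of-z} that $\bm{S}_k$ appears \emph{nowhere else}; once that is confirmed, the conclusion follows immediately from nonnegativity of the two trace terms together with $\bm{\Sigma}_k\succ 0$.
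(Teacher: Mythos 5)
Your proof is correct, but it takes a genuinely different route from the paper. The paper's proof is a one-liner that reads the conclusion off the KKT stationarity equation \eqref{eqn:derivative-wrt-U}: the gradient of the relaxed objective with respect to $\bm{U}_k$ is $\left(\bm{G}_k'\bm{G}_k+\lambda\bm{\Sigma}_k^{-1}\right)\bm{U}_k=\bm{0}$, and since $\bm{G}_k'\bm{G}_k+\lambda\bm{\Sigma}_k^{-1}$ is positive definite this forces $\bm{U}_k^*=\bm{0}$. You instead bypass stationarity entirely: you verify that $\bm{S}_k=\bm{U}_k'\bm{U}_k$ enters \eqref{eqn:variance-of-theta} only through $\mbox{Tr}(\bm{G}_k\bm{S}_k\bm{G}_k')\geq 0$ and \eqref{eqn:variance-of-z} only through $\mbox{Tr}(\bm{\Sigma}_k^{-1}\bm{S}_k)\geq 0$, and run a replacement argument on the constrained problem \eqref{eqn:modified-constrained-problem} itself. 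Each approach has something to recommend it. The paper's is shorter given that the derivative equations were already derived, but it silently relies on two things: that the global optimizer of a generally nonconvex QCQP is a KKT point (a constraint-qualification issue), and that the relevant multiplier satisfies $\lambda>0$ --- if $\lambda=0$, positive definiteness of $\bm{G}_k'\bm{G}_k$ alone requires $\bm{G}_k$ to have full column rank. Your monotonicity argument needs neither convexity, nor constraint qualification, nor any sign condition on the multiplier, and it additionally shows the constraint is strictly relaxed by zeroing $\bm{U}_k$. The one caveat, which you correctly flag yourself, is that your argument as stated establishes that \emph{some} optimal solution has $\bm{U}_k^*=\bm{0}$ (equivalently, that zeroing $\bm{U}_k$ never hurts); excluding optimizers with $\bm{U}_k\neq\bm{0}$ whose rows lie in the null space of $\bm{G}_k$ while the constraint is slack requires the full-column-rank condition on $\bm{G}_k$ --- precisely the same rank issue hiding in the paper's proof when $\lambda=0$, so neither argument dominates on this point.
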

\begin{proof}
 \eqref{eqn:derivative-wrt-U} directly shows that $\bm{U}_k^*=\bm{0}$, since $\bm{G}_k' \bm{G}_k + \lambda \bm{\Sigma}_k^{-1}$ is a positive definite matrix.  
\end{proof}
Hence, by solving \eqref{eqn:derivative-wrt-T},  \eqref{eqn:derivative-wrt-M} and \eqref{eqn:derivative-wrt-d}, we can find $\{\bm{T}_k^*(\lambda) ,   \bm{M}_k^*(\lambda), \bm{d}_k^*(\lambda)\}_{1 \leq k \leq N}$ as a function of $\lambda$. Putting these values in the constraint of \eqref{eqn:modified-constrained-problem} and equating both sides yields $\lambda$; then $\{\bm{T}_k^*(\lambda) ,   \bm{M}_k^*(\lambda), \bm{d}_k^*(\lambda)\}_{1 \leq k \leq N}$ can be used for the attack at time $t$. It is important to note that, $\{\bm{T}_k^*(\lambda) ,   \bm{M}_k^*(\lambda), \bm{d}_k^*(\lambda)\}_{1 \leq k \leq N}$ depend on the estimates, and thus on the history of observations as well. 

Note that, \eqref{eqn:modified-constrained-problem} is a quadratically constrained quadratic problem (QCQP), which is not necessarily convex. Hence, KKT conditions may not yield the globally  optimal solution. However, for the special case where $\{\bm{T}_k\}_{1 \leq k \leq N}$ is fixed, \eqref{eqn:modified-constrained-problem} becomes a convex optimization problem by Lemma~\ref{lemma:convexity}, and hence the above KKT-based procedure yields globally optimally solution. This algorithm is called {\em linear attack algorithm for distributed estimation based on KKT (LAADE-KKT)}.

\begin{figure*}
\hrule

\footnotesize
{\bf Differentiation w.r.t. $\bm{T}_k$:}
\begin{eqnarray} \label{eqn:derivative-wrt-T}
&&\bm{G}_k' \bm{G}_k \bm{T}_k^* \bigg [ \bm{H}_k \bm{A} \bm{\theta}^{(k)}(t-1) \bigg(\bm{\theta}^{(k)}(t-1)\bigg)' \bm{A}' \bm{H}_k' + \bm{H}_k \bm{Q} \bm{H}_k' + \bm{R}_k + 
\bm{H}_k \bm{A} \bigg( \bm{R}(t-1)+ (\hat{\bm{x}}(t-1)-\bm{x}^*) (\hat{\bm{x}}(t-1)-\bm{x}^*)'\bigg) \bm{A}' \bm{H}_k' \nonumber\\
&& -\bigg(\bm{H}_k \bm{A} \bm{\theta}^{(k)}(t-1)(\bm{\hat{x}}(t-1)-\bm{x}^*)'\bm{A}' \bm{H}_k'+
\bm{H}_k \bm{A} (\bm{\hat{x}}(t-1)-\bm{x}^*) (\bm{\theta}^{(k)}(t-1))'\bm{A}' \bm{H}_k' \bigg) \bigg]\nonumber\\
&&- \bm{G}_k' \bigg[ (\bm{A}-N_k \bm{C}_k \bm{A})\bm{\theta}^{(k)}(t-1)+ \bm{C}_k \bm{A} \sum_{j \in \mathcal{N}_k} \bm{\theta}^{(j)}(t-1)-(\bm{I}-\bm{A})\bm{x}^*+ \bm{G}_k (\bm{M}_k^* \bm{\theta}^{(k)}(t-1)+\bm{d}_k^*) \bigg]     (\bm{\theta}^{(k)}(t-1))'\bm{A}' \bm{H}_k'\nonumber\\
&& + \lambda \bm{\Sigma}_k^{-1} \bm{T}_k^* \bigg[ \bm{H}_k \bm{Q} \bm{H}_k' + \bm{R}_k + \bm{H}_k \bm{A} \bm{R}(t-1) \bm{H}_k' \bm{A}' + \bm{H}_k \bm{A} \bigg( \bm{\hat{x}}(t-1)-\bm{\hat{x}}^{(k)}(t-1) \bigg)  \bigg( \bm{\hat{x}}(t-1)-\bm{\hat{x}}^{(k)}(t-1) \bigg)' \bm{H}_k' \bm{A}' \bigg]\nonumber\\
&&+\lambda \bm{\Sigma}_k^{-1} \bigg( \bm{M}_k^* \bm{\theta}^{(k)}(t-1) + \bm{d}_k^* \bigg) \bigg( \hat{\bm{x}}(t-1)-\hat{\bm{x}}^{(k)}(t-1) \bigg) \bm{A}' \bm{H}_k'=0  
\end{eqnarray}
\normalsize
\hrule

\hrule
\footnotesize
{\bf Differentiation w.r.t. $\bm{U}_k$:}
\begin{equation}\label{eqn:derivative-wrt-U}
    \bigg( \bm{G}_k' \bm{G}_k + \lambda \bm{\Sigma}_k^{-1} \bigg) \bm{U}_k=0  
\end{equation}
\normalsize
\hrule 

\footnotesize
{\bf Differentiation w.r.t. $\bm{M}_k$:}
\begin{eqnarray}\label{eqn:derivative-wrt-M}
&& \bm{G}_k' \bm{G}_k \bm{M}_k^* \bm{\theta}^{(k)}(t-1)\bm(\bm{\theta}^{(k)}(t-1) \bm)' \nonumber\\
&& + 2 \bm{G}_k' \bigg( (\bm{A}-\bm{G}_k \bm{T}_k^* \bm{H}_k \bm{A}-N_k \bm{C}_k \bm{A})\bm{\theta}^{(k)}(t-1)+ \bm{C}_k \bm{A} \sum_{j \in \mathcal{N}_k} \bm{\theta}^{(j)}(t-1)-(\bm{I}-\bm{A})\bm{x}^*+ \bm{G}_k \bm{d}_k^* \bigg) \bigg(\bm{\theta}^{(k)}(t-1) \bigg)' \nonumber\\
&& + \bm{G}_k' \bm{G}_k \bm{T}_k^* \bm{H}_k \bm{A} \bigg(\hat{\bm{x}}(t-1)-\bm{x}^* \bigg) \bigg(\bm{\theta}^{(k)}(t-1)\bigg)'  + 2 \lambda \bm{\Sigma}_k^{-1} \bm{T}_k^* \bm{H}_k \bm{A} \bigg(  \hat{\bm{x}}(t-1)-  \hat{\bm{x}}^{(k)}(t-1)  \bigg) \bigg(\bm{\theta}^{(k)}(t-1)\bigg)' \nonumber\\
&& + 2 \lambda \bm{\Sigma}_k^{-1} \bm{M}_k^* \bm{\theta}^{(k)}(t-1) \bigg(\bm{\theta}^{(k)}(t-1)\bigg)'+ 2 \xi \bm{M}_k=0 
\end{eqnarray}
\normalsize
\hrule 

\footnotesize
{\bf Differentiation w.r.t. $\bm{d}_k$:}
\begin{eqnarray}\label{eqn:derivative-wrt-d}
&&  \bm{G}_k' \bigg( (\bm{A}-\bm{G}_k \bm{T}_k^* \bm{H}_k \bm{A}-N_k \bm{C}_k \bm{A})\bm{\theta}^{(k)}(t-1)+ \bm{C}_k \bm{A} \sum_{j \in \mathcal{N}_k} \bm{\theta}^{(j)}(t-1)-(\bm{I}-\bm{A})\bm{x}^*+ \bm{G}_k (\bm{M}_k^* \bm{\theta}^{(k)}(t-1)+\bm{d}_k^*) \bigg)\nonumber\\
&& + \bm{G}_k' \bm{G}_k \bm{T}_k^* \bm{H}_k \bm{A} \bigg(\hat{\bm{x}}(t-1)-\bm{x}^* \bigg) +  \lambda \bm{\Sigma}_k^{-1} \bigg( \bm{T}_k^* \bm{H}_k \bm{A} \hat{\bm{x}}(t-1)- \bm{T}_k^* \bm{H}_k \bm{A} \hat{\bm{x}}^{(k)}(t-1)+\bm{M}_k^* \bm{\theta}^{(k)}(t-1)+\bm{d}_k^*  \bigg) =0 
\end{eqnarray}
\normalsize
\hrule
 
\end{figure*}

\subsection{Updating $\lambda(t)$ iteratively: OLAADE-KKT}
Note that, solving \eqref{eqn:constrained-optimization-problem} will require us to solve a constrained average-cost Markov decision process (MDP; see \cite{bertsekas07dynamic-programming-optimal-control-2}) to find an optimal policy, since the  decision obtained by solving  \eqref{eqn:modified-constrained-problem} at any time will affect the future estimates made at the nodes, and thus the future cost incurred by the attacker as well. Obviously, solving \eqref{eqn:modified-constrained-problem} will always return a {\em myopic policy}. However, due to the complicated structure of the problem, especially due to the complex process of   evolution of the single stage objective function and constraint function in \eqref{eqn:constrained-optimization-problem} over time, we   resorted to solve \eqref{eqn:modified-constrained-problem} as an alternative to solving MDP. However, 
\eqref{eqn:modified-constrained-problem} is a one-shot optimization problem where the objective and constraint both are some conditional expectations given the history $\mathcal{F}_{t-1}$, while \eqref{eqn:constrained-optimization-problem} is a sequential optimization problem where the objective and constraint are averaged over independent sample paths. 

In this subsection, we will provide an online version of LAADE-KKT, i.e., OLAADE-KKT, which will seek to meet the constraint of \eqref{eqn:constrained-optimization-problem}. This algorithm maintains a running iterate  $\lambda(t-1)$, and computes $\bm{T}_k(t-1)=\bm{T}_k^*(\lambda(t-1)), \bm{M}_k(t-1)=\bm{M}_k^*(\lambda(t-1)), \bm{d}_k(t-1)=\bm{d}_k^*(\lambda(t-1))$ to solve \eqref{eqn:unconstrained-optimization-problem} at time $t$ by using the set of linear equations \eqref{eqn:derivative-wrt-T}, \eqref{eqn:derivative-wrt-U}, \eqref{eqn:derivative-wrt-M}, \eqref{eqn:derivative-wrt-d}. Then it makes the following update:
\begin{equation}\label{eqn:lambda-update-OLAADE-KKT-1}
 \lambda(t)=[\lambda(t-1)+b(t) (\sum_{k=1}^N \tilde{\tilde{\bm{z}}}_k(t)' \bm{\Sigma}_k^{-1} \tilde{\tilde{\bm{z}}}_k(t)-\frac{\alpha \eta}{J})]_0^{A_0}
\end{equation}
{\em where $\tilde{\tilde{\bm{z}}}_k(t)$ is the innovation at node~$k$ at time~$t$, which is obtained by applying $\{\bm{T}_k(t-1)=\bm{T}_k^*(\lambda(t-1)), \bm{M}_k(t-1)=\bm{M}_k^*(\lambda(t-1)), \bm{d}_k(t-1)=\bm{d}_k^*(\lambda(t-1))\}_{1 \leq k \leq N}$   on an independently generated/simulated state-observation sequence $\{\tilde{\tilde{\bm{x}}}(\tau),\tilde{\tilde{\bm{y}}}(\tau)\}_{0 \leq \tau \leq t}$.} Step size sequence  $\{b(t)\}_{t \geq 0}$ is a   sequence of non-negative numbers such that $\sum_{t=0}^{\infty}b(t)=\infty, \sum_{t=0}^{\infty}b^2(t)<\infty$. The iterations are projected onto a compact interval $[0,A_0]$ to ensure boundedness. The number $A_0$ is chosen to be sufficiently large so that, if,  for any $\lambda^* \geq 0$, the constraint in \eqref{eqn:modified-constrained-problem} is met with equality under $\{\bm{T}_k^*(\lambda^*), \bm{M}_k^*(\lambda^*), \bm{d}_k^*(\lambda^*)\}_{1 \leq k \leq N}$, then $\lambda^* \in [0,A_0)$.  This iteration is motivated by the theory of stochastic approximation~\cite{borkar08stochastic-approximation-book}, where the goal is to meet the constraint in \eqref{eqn:constrained-optimization-problem} with equality. This algorithm is referred to as OLAADE-KKT-1.

However, the constraint in \eqref{eqn:constrained-optimization-problem}   actually involves an upper bound to the attack detection probability averaged over   time. If the attacker has   access to the alarms raised by the detectors deployed in various nodes, then that additional information can be used to update $\lambda(t)$. Let the indicator that at least one alarm is raised  at time~$t$ be denoted by $I_t'$, {\em which is obtained by applying  $\{\bm{T}_k(t-1)=\bm{T}_k^*(\lambda(t-1)), \bm{M}_k(t-1)=\bm{M}_k^*(\lambda(t-1)), \bm{d}_k(t-1)=\bm{d}_k^*(\lambda(t-1))\}_{1 \leq k \leq N}$ on  an independently generated/simulated state-observation sequence $\{\tilde{\tilde{\bm{x}}}(\tau),\tilde{\tilde{\bm{y}}}(\tau)\}_{0 \leq \tau \leq t}$.} Then, $\lambda(t)$ can be updated as:
\begin{equation}\label{eqn:lambda-update-OLAADE-KKT-2}
 \lambda(t)=[\lambda(t-1)+b(t) (I_t'-\alpha)]_0^{A_0}
\end{equation}
Again here $A_0$ is chosen so large that, for any $\lambda^* \geq 0$ such that the detection probability $P_d(\lambda^*)=\alpha$, we have $\lambda^*<A_0$.

This modified algorithm is called OLAADE-KKT-2. It is interesting to note that OLAADE-KKT-2 is agnostic to the value of $\eta$ used by the detectors.

\subsubsection{Complexity reduction}\label{subsubsection:low-complexity-OLAADE-KKT}
Note that, in OLAADE-KKT-1, $\tilde{\tilde{\bm{z}}}_k(t)$ is the innovation at node~$k$ at time~$t$, when $\{\bm{T}_k^*(\lambda(t-1)), \bm{M}_k^*(\lambda(t-1)), \bm{d}_k^*(\lambda(t-1))\}_{1 \leq k \leq N}$ is applied on an independently genereted/simulated state-observation sequence $\{\tilde{\tilde{\bm{x}}}(\tau),\tilde{\tilde{\bm{y}}}(\tau)\}_{0 \leq \tau \leq t}$. Using an independently generated/simulated state-observation sequence up to time $t$ is necessary for the convergence proof of OLAADE-KKT-1, because a particular noise sequence in the convergence proof need to be Martingale difference noise sequence. Also, at each time $t$, we need to run this operation over the simulated history over time $\{0,1,\cdots,t\}$ in order to ensure that an offset term in the proof remains $o(1)$ instead of $O(1)$. Hence, computing $\{\tilde{\tilde{\bm{z}}}_k(t)\}_{1 \leq k \leq N}$ will require $O(t)$ computations at time~$t$, which is not practically feasible. However, we can avoid this $O(t)$ computation by replacing  $\tilde{\tilde{\bm{z}}}_k(t)$ in \eqref{eqn:lambda-update-OLAADE-KKT-1}  simply by $\tilde{\bm{z}}_k(t)$ which is the innovation at node~$k$ at time~$t$ under the scheme that applies $\{\bm{T}_k^*(\lambda(\tau-1)), \bm{M}_k^*(\lambda(\tau-1)), \bm{d}_k^*(\lambda(\tau-1))\}_{1 \leq k \leq N}$ on $\bm{y}(\tau)$ for all $\tau$. This low complexity version of OLAADE-KKT-1 is denoted by OLAADE-KKT-1-LC. 

Similarly, the $O(t)$ computation at time~$t$ for OLAADE-KKT-2 can be avoided by replacing $I_t'$ in \eqref{eqn:lambda-update-OLAADE-KKT-2} by $I_t$ which is obtained by applying $\{\bm{T}_k^*(\lambda(\tau-1)), \bm{M}_k^*(\lambda(\tau-1)), \bm{d}_k^*(\lambda(\tau-1))\}_{1 \leq k \leq N}$ on $\bm{y}(\tau)$ for all $\tau$; this low complexity version is henceforth called OLAADE-KKT-2-LC.

While the low-complexity versions are practically feasible, their convergence proof is technically very challenging. Hence, we will only prove convergence of OLAADE-KKT-1 and OLAADE-KKT-2 later in this paper.

\subsection{Convergence analysis of OLAADE-KKT}
Since LAADE-KKT does not involve any iteration, it does not exhibit any convergence property. Here, we discuss convergence properties of OLAADE-KKT-1 and OLAADE-KKT-2, where $\{\bm{T}_k\}_{1 \leq k \leq N}$ are fixed and known, so that  \eqref{eqn:modified-constrained-problem} becomes a convex optimization problem by Lemma~\ref{lemma:convexity}.

\begin{assumption}\label{assumption:M-has-max-eigenvalue-less-than-1}
The matrices $\{\bm{T}_k\}_{1 \leq k \leq N}$ are such that the $\bm{M}$ matrix of Section~\ref{section:error-dynamics-under-attack} has a spectral radius less than $1$.
\end{assumption}

\subsubsection{Convergence of OLAADE-KKT-1}\label{subsubsection:convergence-of-OLAADE-KKT-1}
Note that, if OLAADE-KKT-1 uses a fixed $\lambda \geq 0$ all the time, then at time~$t$, the attacker takes up   the history available   up to time $(t-1)$, and computes $\{\bm{M}_k^*(\lambda, \{\hat{\bm{x}}^{(j)}(t-1)\}_{1 \leq j \leq N}, \hat{\bm{x}}(t-1)),  \bm{d}_k^*(\lambda, \{\hat{\bm{x}}^{(j)}(t-1)\}_{1 \leq j \leq N}, \hat{\bm{x}}(t-1))\}_{1 \leq k \leq N}$ (which are sample-path-dependent, i.e., dependent on $\{\bm{y}(\tau)\}_{0 \leq \tau \leq t-1}$) which are further   used to compute the estimates at time $t$. 

\begin{lemma}\label{lemma:OLAADE-KKT-1-iterates-reach-steady-state-distribution}
For a fixed $\lambda \geq 0$ and under OLAADE-KKT-1 and Assumption~\ref{assumption:M-has-max-eigenvalue-less-than-1}, the sequence of iterates $\{\bm{M}_k(t), \bm{d}_k(t)\}_{1 \leq k \leq N, t \geq 0}$ reach a steady state distribution $g_{\lambda}^*(\cdot)$.
\end{lemma}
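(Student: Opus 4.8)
The plan is to freeze $\lambda$, recognize that everything the attacker computes is driven by a time‑homogeneous Markov chain of estimates, invoke the ergodic theory for such chains, and then push the resulting stationary law forward through the (deterministic) KKT map. First I would identify that chain. Define the augmented state $\bm{s}(t) \doteq (\{\hat{\bm{x}}^{(k)}(t)\}_{1\le k\le N},\, \hat{\bm{x}}(t),\, \bm{x}(t))$, collecting the node estimates produced under attack, the centralized Kalman estimate, and the true state. This is Markov: $\bm{x}(t)$ evolves by \eqref{eqn:process-equation}; $\hat{\bm{x}}(t)$ follows the standard Kalman recursion driven by the current true observation; and each $\hat{\bm{x}}^{(k)}(t)$ follows the KCF recursion \eqref{eqn:KCF-equation} in which the injected bias equals $\bm{M}_k(t)\bm{\theta}^{(k)}(t-1)+\bm{d}_k(t)$, with $(\bm{M}_k(t),\bm{d}_k(t))$ the KKT solution of \eqref{eqn:modified-constrained-problem}. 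Since $\bm{S}_k^*=\bm{0}$ by \eqref{eqn:derivative-wrt-U}, this bias is a deterministic function of the state, so the only randomness in the recursion comes from $\bm{w}(t-1)$ and $\{\bm{v}_k(t)\}$. The Kalman covariance $\bm{R}(t)$ converges deterministically to its steady value, so the kernel is asymptotically time‑homogeneous, and I would work with the limiting kernel.

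Next I would show that the control map is well posed and continuous. Because $\xi>0$ makes the relaxed objective strictly convex in $(\bm{U}_k,\bm{M}_k,\bm{d}_k)$ for fixed $\bm{T}_k$ (Lemma~\ref{lemma:convexity}), the stationarity equations \eqref{eqn:derivative-wrt-M}, \eqref{eqn:derivative-wrt-d} have a unique solution; their coefficients are polynomial in $\bm{s}(t-1)$ through $\bm{\theta}^{(k)}(t-1)$, $\hat{\bm{x}}(t-1)-\bm{x}^*$ and $\hat{\bm{x}}(t-1)-\hat{\bm{x}}^{(k)}(t-1)$, and the coefficient matrix of the linear system is invertible, so the map $\Phi^\lambda:\bm{s}(t-1)\mapsto\{\bm{M}_k^*,\bm{d}_k^*\}_{1\le k\le N}$ is continuous. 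Consequently the one‑step transition of $\bm{s}(t)$ is a continuous function of the previous state plus the nondegenerate additive Gaussian innovations.

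I would then establish a Foster--Lyapunov drift condition with $V(\bm{s})=\|\bm{s}\|^2$. The contraction of the homogeneous deviation dynamics guaranteed by Assumption~\ref{assumption:M-has-max-eigenvalue-less-than-1} and Lemma~\ref{lemma:proof-of-stability-of-theta} (spectral radius of $\bm{M}$ below $1$), together with the mean‑square stability of the underlying process and its Kalman filter, should yield $\mathbb{E}[V(\bm{s}(t))\mid \bm{s}(t-1)]\le \gamma V(\bm{s}(t-1))+b$ for some $\gamma<1$ and finite $b$ outside a compact set. Using $\bm{Q}\succ\bm{0}$, $\bm{R}_k\succ\bm{0}$ and the observability of $(\bm{A},\bm{H}_k)$ to make the reachable set full‑dimensional, the kernel admits a positive density with respect to a maximal irreducibility measure, so the chain is $\psi$‑irreducible, aperiodic, and every compact set is small; with the drift condition this gives positive Harris recurrence. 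The ergodic theorem for such chains (see \cite{borkar08stochastic-approximation-book}) then gives convergence in distribution of $\bm{s}(t)$ to a unique stationary law $\pi_\lambda$, and since $\{\bm{M}_k(t),\bm{d}_k(t)\}_{1\le k\le N}=\Phi^\lambda(\bm{s}(t-1))$ with $\Phi^\lambda$ continuous, the continuous‑mapping theorem yields $\{\bm{M}_k(t),\bm{d}_k(t)\}_{1\le k\le N}\Rightarrow g_\lambda^*$, the pushforward of $\pi_\lambda$, which is the assertion.

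The main obstacle is the drift step. The attacker's feedback $\{\bm{M}_k^*,\bm{d}_k^*\}$ is a genuinely state‑dependent, nonlinear control, so linear‑systems stability cannot be quoted directly; I must verify that the closed‑loop contraction survives this feedback and that $\Phi^\lambda$ does not amplify the state faster than the $\bm{M}$‑dynamics of Lemma~\ref{lemma:proof-of-stability-of-theta} contract it. This step also forces the underlying process to be mean‑square stable, so that $\hat{\bm{x}}(t)-\bm{x}^*$ (which enters the control) stays bounded in $L^2$; I would make this stability hypothesis explicit and confine the Lyapunov argument to the regime it guarantees.
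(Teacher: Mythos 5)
Your route is essentially the paper's own argument made rigorous: the paper's entire proof is that, by Assumption~\ref{assumption:M-has-max-eigenvalue-less-than-1} and Lemma~\ref{lemma:proof-of-stability-of-theta}, the processes $\{\hat{\bm{x}}^{(k)}(t)\}$ and $\{\hat{\bm{x}}(t)\}$ are stable, and hence, ``from \eqref{eqn:derivative-wrt-M} and \eqref{eqn:derivative-wrt-d}, the lemma is proved'' --- i.e., the iterates are the image of a stable state process under the KKT solution map. Your augmented chain $\bm{s}(t)$, the well-posedness and continuity of $\Phi^\lambda$, the Foster--Lyapunov/Harris machinery, and the final continuous-mapping step are exactly the details that two-sentence proof suppresses; what they buy is an actual limit theorem (unique stationary law, convergence in distribution) where the paper only gestures at one.

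The obstacle you flag at the end is, however, a genuine gap --- and it is the paper's gap as much as yours. Lemma~\ref{lemma:proof-of-stability-of-theta} is proved for \emph{constant}, state-independent attack parameters: the concatenated error obeys $\bm{\theta}(t)=\bm{M}\bm{\theta}(t-1)+\bm{\zeta}_t$ in \eqref{eqn:theta-evolution} only when the injected bias is treated as exogenous. Under OLAADE-KKT-1 the bias is state feedback, so the closed-loop dynamics matrix is not $\bm{M}$, and Assumption~\ref{assumption:M-has-max-eigenvalue-less-than-1} does not by itself deliver your drift inequality. The concrete way to close this is to note that, for fixed $\bm{T}_k$, \eqref{eqn:derivative-wrt-d} is a linear equation in the bias $\bm{\beta}_k \doteq \bm{M}_k^*\bm{\theta}^{(k)}(t-1)+\bm{d}_k^*$, whose solution $\bm{\beta}_k=-(\bm{G}_k'\bm{G}_k+\lambda\bm{\Sigma}_k^{-1})^{-1}(\cdots)$ is \emph{affine} in $\bigl(\{\bm{\theta}^{(j)}(t-1)\}_{j},\hat{\bm{x}}(t-1)\bigr)$. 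Hence the closed loop remains an affine-Gaussian recursion, but with a modified dynamics matrix $\tilde{\bm{M}}(\lambda)$ (the blocks of $\bm{M}$ corrected by the feedback terms $-\bm{G}_k(\bm{G}_k'\bm{G}_k+\lambda\bm{\Sigma}_k^{-1})^{-1}(\cdots)$); what your drift step actually needs is $\rho(\tilde{\bm{M}}(\lambda))<1$, uniformly over $\lambda\in[0,A_0]$, which must either be verified from the system matrices or stated as the standing hypothesis in place of Assumption~\ref{assumption:M-has-max-eigenvalue-less-than-1}. Once you have it, you can also bypass Harris theory entirely: an affine recursion with stable matrix, asymptotically constant coefficients (since $\bm{R}(t)$ converges), and nondegenerate Gaussian noise has Gaussian marginals whose mean and covariance recursions converge, which gives the stationary limit of $\bm{s}(t)$ directly, and your continuous-mapping step then produces $g_\lambda^*$ as the pushforward.
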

\begin{proof}
By Assumption~\ref{assumption:M-has-max-eigenvalue-less-than-1} and Lemma~\ref{lemma:proof-of-stability-of-theta}, $\{\hat{\bm{x}}_k(t)\}_{t \geq 0}$ and $\{\hat{\bm{x}}(t)\}_{t \geq 0}$ are stable. Hence, from \eqref{eqn:derivative-wrt-M} and \eqref{eqn:derivative-wrt-d}, the lemma is proved.
\end{proof}

Let us define the distribution of $\{\bm{M}_k(t), \bm{d}_k(t)\}_{1 \leq k \leq N, t \geq 0}$ under OLAADE-KKT-1 with a fixed $\lambda$ as $g_{t, \lambda}(\cdot)$, and the distribution of $\{\bm{M}_k(t), \bm{d}_k(t)\}_{1 \leq k \leq N, t \geq 0}$ under OLAADE-KKT-1 with   $\lambda(t)$ update as $g_t(\cdot)$. 
Also, let $\mu_{\lambda, \{\bm{M}_k, \bm{d}_k\}_{1 \leq k \leq N}}$ denote a generic   decision rule or policy    under  OLAADE-KKT-1 with a fixed parameter set $\lambda, \{\bm{M}_k, \bm{d}_k\}_{1 \leq k \leq N}$.  

Let us define:
\begin{eqnarray*}
    \Lambda & \doteq & \{\lambda \in [0,A_0) : \lim_{t \rightarrow \infty} \mathbb{E}_{\{\bm{M}_k, \bm{d}_k\}_{1 \leq k \leq N} \sim g_{\lambda}^*(\cdot) } \mathbb{E}_{\mu_{\lambda, \{\bm{M}_k, \bm{d}_k\}_{1 \leq k \leq N}}} \\
    && [\sum_{k=1}^N \tilde{\tilde{\bm{z}}}_k(t)' \bm{\Sigma}_k^{-1} \tilde{\tilde{\bm{z}}}_k(t)] =\frac{\alpha \eta}{J}\}
\end{eqnarray*}

\begin{theorem}\label{theorem:convergence-of-lambda-OLAADE-KKT-1}
Under Assumption~\ref{assumption:M-has-max-eigenvalue-less-than-1} and  OLAADE-KKT-1, the iterates $\lambda(t) \rightarrow \Lambda$  almost surely, and the limiting distributions satisfy $\lim_{t \rightarrow \infty}||g_t(\cdot)-g_{t,\lambda(t)}(\cdot)||_{TV}=0$ almost surely.
\end{theorem}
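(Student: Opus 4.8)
The plan is to read the $\lambda(t)$ recursion \eqref{eqn:lambda-update-OLAADE-KKT-1} as a \emph{projected scalar stochastic approximation} driven by fast, stable inner dynamics, and to apply the ODE method of \cite{borkar08stochastic-approximation-book}. First I would rewrite the increment as $\lambda(t)=\Pi_{[0,A_0]}[\lambda(t-1)+b(t)(\bar h(\lambda(t-1))+M(t)+\epsilon(t))]$, where $\bar h(\lambda)\doteq\mathbb{E}_{\{\bm M_k,\bm d_k\}\sim g_\lambda^*}\mathbb{E}_{\mu_{\lambda,\{\bm M_k,\bm d_k\}}}[\sum_k\tilde{\tilde{\bm z}}_k(t)'\bm\Sigma_k^{-1}\tilde{\tilde{\bm z}}_k(t)]-\frac{\alpha\eta}{J}$ is the mean field evaluated at the steady-state law $g_\lambda^*$ supplied by Lemma~\ref{lemma:OLAADE-KKT-1-iterates-reach-steady-state-distribution}. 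Because each update is computed on a \emph{freshly, independently} simulated trajectory, conditioning on $\mathcal F_{t-1}$ (hence on $\lambda(t-1)$) makes $M(t)$ a martingale-difference sequence; under Assumption~\ref{assumption:M-has-max-eigenvalue-less-than-1} and Lemma~\ref{lemma:stability-of-innovation} the innovations have uniformly bounded second moments over the compact set $[0,A_0]$, so $\sum_t b(t)M(t)$ converges almost surely given $\sum_t b^2(t)<\infty$. The residual $\epsilon(t)$ is the bias from the length-$t$ simulation not being exactly at stationarity; running the current parameters over the entire simulated history $\{0,\dots,t\}$ forces $\epsilon(t)=o(1)$, so it does not perturb the limiting ODE.

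Next I would identify the limit set. The projected mean-field ODE is $\dot\lambda=\bar h(\lambda)+z$, with $z$ the minimal boundary-reflection term keeping $\lambda\in[0,A_0]$. I would first verify that $\bar h$ is continuous (indeed Lipschitz) in $\lambda$, which reduces to continuity of the KKT solution map $\lambda\mapsto(\bm T_k^*(\lambda),\bm M_k^*(\lambda),\bm d_k^*(\lambda))$ determined by the linear systems \eqref{eqn:derivative-wrt-T}, \eqref{eqn:derivative-wrt-M}, \eqref{eqn:derivative-wrt-d}, together with continuity of $g_\lambda^*$. An envelope/Lagrangian argument then shows that the equilibrium value of the constraint functional is non-increasing in $\lambda$, i.e. $\bar h$ is monotone non-increasing; since $\lambda$ is scalar, this provides the Lyapunov structure making $\Lambda=\{\lambda\in[0,A_0):\bar h(\lambda)=0\}$ globally attracting for the projected ODE, and the choice of $A_0$ guarantees every such $\lambda^*$ lies in the interior so no spurious boundary equilibria arise. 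The projected-SA convergence theorem then yields $\lambda(t)\to\Lambda$ almost surely.

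For the second assertion I would prove a tracking lemma. The estimates $\{\hat{\bm x}^{(j)}(t)\}$ form a Markov chain whose one-step kernel at time $t$ depends on $\lambda(t-1)$ only through the continuous maps above; under Assumption~\ref{assumption:M-has-max-eigenvalue-less-than-1} and Lemma~\ref{lemma:proof-of-stability-of-theta} this family is geometrically ergodic uniformly over the compact set $[0,A_0]$, with a contraction rate independent of $\lambda$. Since the multiplier increments are $O(b(t))\to 0$, the kernel drifts by $o(1)$ over any fixed number of steps, while the chain contracts toward its instantaneous stationary law at a geometric rate; balancing the slow kernel drift against the fast contraction by coupling the actual chain with a frozen-$\lambda(t)$ copy yields $\|g_t(\cdot)-g_{t,\lambda(t)}(\cdot)\|_{TV}\to 0$ almost surely.

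The hard part will be this tracking step: establishing the \emph{uniform} geometric ergodicity in $\lambda$ and then quantifying the accumulated effect of the slowly varying kernel so that it is dominated by the mixing in total variation. Showing that $\bar h$ is Lipschitz and monotone (through sensitivity of the KKT map and the feedback of $\lambda$ through $g_\lambda^*$) is also nontrivial, but the timescale-separation/coupling estimate needed for the $\|\cdot\|_{TV}$ conclusion is where the real technical weight lies; controlling the $o(1)$ simulation bias $\epsilon(t)$ is the secondary delicate point, and is precisely why the algorithm re-simulates the full history at each step.
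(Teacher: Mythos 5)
Your proposal follows essentially the same route as the paper's proof: both recast \eqref{eqn:lambda-update-OLAADE-KKT-1} as a projected scalar stochastic approximation whose increment decomposes into the stationary mean field $\bar h(\lambda)$ (well defined via Lemma~\ref{lemma:OLAADE-KKT-1-iterates-reach-steady-state-distribution}), a martingale-difference noise (using the independently simulated trajectories), and an $o(1)$ stationarity bias, and both then invoke Lipschitz continuity of $g_\lambda^*(\cdot)$ in $\lambda$ together with the ODE/projected stochastic approximation theory of \cite{borkar08stochastic-approximation-book}, with the total-variation claim for the fast timescale handled by a tracking argument of the type in Chapter~6, Lemma~1 of that reference. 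Your added ingredients (the monotonicity/Lyapunov argument for $\bar h$ and the explicit coupling with uniform geometric ergodicity) are elaborations of steps the paper disposes of by citation, not a genuinely different method.
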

\begin{proof}
See Appendix~\ref{appendix:proof-of-convergence-of-lambda-OLAADE-KKT-1}. The proof is based on the theory of stochastic approximation in  \cite{borkar08stochastic-approximation-book}. 
\end{proof}
However, it is important to note that the convergence can be sample-path dependent.

\subsubsection{Convergence of OLAADE-KKT-2}
Let   us define:

\footnotesize
\begin{eqnarray*}
    \Lambda' & \doteq & \{\lambda \in [0,A_0) : \lim_{t \rightarrow \infty} \mathbb{E}_{\{\bm{M}_k, \bm{d}_k\}_{1 \leq k \leq N} \sim g_{\lambda}^*(\cdot) } \mathbb{E}_{\mu_{\lambda, \{\bm{M}_k, \bm{d}_k\}_{1 \leq k \leq N}}} \\
    && (I_t') =\alpha \}
\end{eqnarray*}
\normalsize

\begin{theorem}\label{theorem:convergence-of-lambda-OLAADE-KKT-2}
Under Assumption~\ref{assumption:M-has-max-eigenvalue-less-than-1} and  OLAADE-KKT-2, the iterates $\lambda(t) \rightarrow \Lambda'$  almost surely, and the limiting distributions satisfy $\lim_{t \rightarrow \infty}||g_t(\cdot)-g_{t,\lambda(t)}(\cdot)||_{TV}=0$ almost surely.
\end{theorem}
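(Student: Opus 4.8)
The plan is to reproduce the argument behind Theorem~\ref{theorem:convergence-of-lambda-OLAADE-KKT-1} almost verbatim, with the unbounded quadratic driving term $\sum_{k=1}^N \tilde{\tilde{\bm{z}}}_k(t)' \bm{\Sigma}_k^{-1} \tilde{\tilde{\bm{z}}}_k(t) - \frac{\alpha\eta}{J}$ replaced by the bounded driving term $(I_t'-\alpha)$ of \eqref{eqn:lambda-update-OLAADE-KKT-2}. First I would read \eqref{eqn:lambda-update-OLAADE-KKT-2} as a projected stochastic-approximation recursion in the slow variable $\lambda$, driven by the fast (rate-$1$) Markov chain formed by the estimates and the KKT-derived parameters $\{\bm{M}_k(t),\bm{d}_k(t)\}_{1\le k\le N}$. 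With $\{\bm{T}_k\}_{1\le k\le N}$ fixed so that the one-shot problem is convex (Lemma~\ref{lemma:convexity}) and the map $\lambda\mapsto\{\bm{M}_k^*(\lambda),\bm{d}_k^*(\lambda)\}_{1\le k\le N}$ is well-defined, Assumption~\ref{assumption:M-has-max-eigenvalue-less-than-1} together with Lemma~\ref{lemma:proof-of-stability-of-theta} gives stability of the fast chain, and Lemma~\ref{lemma:OLAADE-KKT-1-iterates-reach-steady-state-distribution} supplies its stationary law $g_\lambda^*(\cdot)$; let $\bar h(\lambda)$ denote the associated stationary detection probability appearing in the definition of $\Lambda'$.

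Next I would add and subtract conditional means to decompose $I_t'-\alpha = (\bar h(\lambda(t-1))-\alpha) + (\mathbb{E}[I_t'\mid\mathcal{F}_{t-1}]-\bar h(\lambda(t-1))) + (I_t'-\mathbb{E}[I_t'\mid\mathcal{F}_{t-1}])$. The last term is a martingale-difference sequence, and since $I_t'\in\{0,1\}$ it is bounded by $1$, so $\sum_t b(t)^2\,\mathbb{E}[(I_t'-\mathbb{E}[I_t'\mid\mathcal{F}_{t-1}])^2\mid\mathcal{F}_{t-1}]\le \sum_t b(t)^2<\infty$ and the noise martingale $\sum_t b(t)(I_t'-\mathbb{E}[I_t'\mid\mathcal{F}_{t-1}])$ converges almost surely; this step is in fact easier than in Theorem~\ref{theorem:convergence-of-lambda-OLAADE-KKT-1}, where the quadratic term needed explicit moment control. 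The middle (offset) term is the gap between the instantaneous and the stationary detection probabilities; exactly as in the proof of Theorem~\ref{theorem:convergence-of-lambda-OLAADE-KKT-1}, evaluating $I_t'$ on the independently simulated sequence $\{\tilde{\tilde{\bm{x}}}(\tau),\tilde{\tilde{\bm{y}}}(\tau)\}_{0\le\tau\le t}$ started at time $0$ forces the fast chain to have mixed toward $g_{\lambda(t-1)}^*$, so this offset is $o(1)$ and is absorbed into the ODE-method error (this is precisely why the low-complexity variant is not covered).

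The one genuinely new ingredient is continuity of the mean field $\bar h(\cdot)$, which is no longer polynomial in the parameters. I would argue that $\lambda\mapsto\bar h(\lambda)$ is continuous because the attack parameters $\bm{T}_k^*(\lambda),\bm{M}_k^*(\lambda),\bm{d}_k^*(\lambda)$ solve the linear systems \eqref{eqn:derivative-wrt-T}, \eqref{eqn:derivative-wrt-M}, \eqref{eqn:derivative-wrt-d}, whose coefficients depend continuously on $\lambda$; the resulting stationary law $g_\lambda^*$ varies continuously with $\lambda$; and the windowed statistic $\sum_{t}\tilde{\bm{z}}_k(t)'\bm{\Sigma}_k^{-1}\tilde{\bm{z}}_k(t)$ is a quadratic form in jointly Gaussian variables, hence has a density with no atom at the threshold $\eta$, so the alarm probability (and thus $\bar h$) varies continuously. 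With bounded martingale noise, a continuous mean field, the $o(1)$ offset and projection onto the compact $[0,A_0]$, the Kushner--Clark/ODE method of \cite{borkar08stochastic-approximation-book} yields that $\lambda(t)$ converges almost surely to the limit set of the projected scalar ODE $\dot\lambda = \bar h(\lambda)-\alpha$. Being one-dimensional with continuous right-hand side, this ODE has no periodic orbits and converges to its set of equilibria; the interior equilibria are exactly $\{\lambda:\bar h(\lambda)=\alpha\}=\Lambda'$, and the choice of $A_0$ places them inside $[0,A_0)$ so the boundary projection creates no spurious limit points. Finally, $\|g_t-g_{t,\lambda(t)}\|_{TV}\to0$ a.s.\ follows from the standard tracking estimate for slowly-varying stochastic approximation: since $\lambda$ moves by $O(b(t))\to0$ per step while the fast chain is ergodic with $\lambda$-continuous stationary law $g_\lambda^*$ (Lemma~\ref{lemma:OLAADE-KKT-1-iterates-reach-steady-state-distribution}), the law of the actual iterates tracks that of the frozen-$\lambda$ chain.

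I expect the main obstacle to be making the offset term rigorously $o(1)$ jointly with the continuity of $\bar h(\cdot)$: the offset bound requires mixing of the fast chain toward $g_\lambda^*$ that is uniform in $\lambda$ over $[0,A_0]$, and the continuity of $\bar h$ has to be strengthened to local uniformity so that it survives the interaction with the drifting $\lambda(t-1)$; everything else reduces to the bounded-noise specialization of the already-established Theorem~\ref{theorem:convergence-of-lambda-OLAADE-KKT-1}.
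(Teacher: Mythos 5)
Your proposal is correct and follows essentially the same route as the paper, whose entire proof of this theorem is a one-line remark that the argument of Theorem~\ref{theorem:convergence-of-lambda-OLAADE-KKT-1} carries over with $\sum_{k=1}^N \tilde{\tilde{\bm{z}}}_k(t)' \bm{\Sigma}_k^{-1} \tilde{\tilde{\bm{z}}}_k(t)$ replaced by $I_t'$. In fact you supply more than the paper does: the observation that the martingale noise is now trivially bounded, and especially the continuity argument for the stationary detection probability $\bar h(\lambda)$ (via the KKT solutions' continuity in $\lambda$ and the absence of an atom of the Gaussian quadratic form at the threshold $\eta$) addresses precisely the point where the Lipschitz/smoothness hypothesis used in the Theorem~\ref{theorem:convergence-of-lambda-OLAADE-KKT-1} proof no longer holds automatically for an indicator-valued driving term, a gap the paper's terse proof leaves unremarked.
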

\begin{proof}
The proof is very similar to that of Theorem~\ref{theorem:convergence-of-lambda-OLAADE-KKT-2}, except that we use $I_t'$ instead of $\sum_{k=1}^N \tilde{\tilde{\bm{z}}}_k(t)' \bm{\Sigma}_k^{-1} \tilde{\tilde{\bm{z}}}_k(t)$ in this proof. Hence, we omit details of the proof.
\end{proof}

\section{Attack design via SPSA}
\label{section:attack-design-SPSA}
In this section, we propose an {\em  online linear attack algorithm for distributed estimation using SPSA} (OLAADE-SPSA) that allows us to avoid solving the KKT equations at each time~$t$. The OLAADE-SPSA algorithm involves two-timescale stochastic approximation \cite{borkar08stochastic-approximation-book}, which is basically a stochastic gradient descent algorithm with a noisy gradient estimate; \eqref{eqn:unconstrained-optimization-problem} is solved via SPSA in the faster timescale, and $\lambda$ is updated in the slower timescale. 

\subsection{Description of OLAADE-SPSA}
The algorithm requires three positive step size sequences $\{a(t)\}_{t \geq 0}$, $\{b(t)\}_{t \geq 0}$ and $\{c(t)\}_{t \geq 0}$ satisfying the following criteria: (i) $\sum_{t=0}^{\infty} a(t)=\sum_{t=0}^{\infty} b(t)=\infty$, 
(ii) $\sum_{t=0}^{\infty} a^2(t)<\infty, \sum_{t=0}^{\infty} b^2(t)<\infty$, 
(iii) $\lim_{t \rightarrow \infty} \frac{b(t)}{a(t)}=0$, (iv) $\lim_{t \rightarrow \infty} c(t)=0$, and (v) $\sum_{t=0}^{\infty} \frac{a^2 (t)}{c^2(t)}<\infty$. The first three conditions are standard requirements for two-timescale stochastic approximation. The fourth condition ensures that the gradient estimate is asymptotically unbiased, and the fifth condition is required for the convergence of SPSA.

\vspace{2mm}
\hrule
\noindent {\bf The OLAADE-SPSA algorithm}
 \hrule
 \vspace{0.5mm}
 \noindent {\bf Input:} $\{a(t)\}_{t \geq 0}$, $\{b(t)\}_{t \geq 0}$, $\{c(t)\}_{t \geq 0}$, $\alpha$, $\eta$, $J$, $A_0$.

\noindent {\bf Initialization:} $\bm{T}_k(0)$,   $\bm{M}_k(0)$, $\bm{d}_k(0)$ for all $k \in \mathcal{N}$, $\lambda(0), \{\hat{\bm{x}}^{(k)}(0)\}_{1 \leq k \leq N}$, $\hat{\bm{x}}(0)$

\noindent {\bf For $t=1,2,3,\cdots$:}

\begin{enumerate}
\item For each $1 \leq  k \leq N$, the attacker  generates  random matrices $\bm{\Delta}^{(k)}(t)$,   $\bm{\Pi}^{(k)}(t)$ and $\bm{\beta}^{(k)}(t)$ having same dimensions as $\bm{T}_k(t-1)$,   $\bm{M}_k(t-1)$ and $\bm{d}_k(t-1)$ respectively,  whose entries are uniformly and independently chosen from the set $\{-1,1\}$.

\item The attacker computes $\bm{T}_k^+ \doteq \bm{T}_k(t-1) + c(t) \bm{\Delta}^{(k)}(t)$, $\bm{T}_k^- \doteq \bm{T}_k(t-1) - c(t) \bm{\Delta}^{(k)}(t)$,  
$\bm{M}_k^+ \doteq \bm{M}_k(t-1) + c(t) \bm{\Pi}^{(k)}(t)$, 
$\bm{M}_k^- \doteq \bm{M}_k(t-1) - c(t) \bm{\Pi}^{(k)}(t)$, $\bm{d}_k^+ \doteq \bm{d}_k(t-1) + c(t) \bm{\beta}^{(k)}(t)$, $\bm{d}_k^- \doteq \bm{d}_k(t-1) - c(t) \bm{\beta}^{(k)}(t)$, for all $1 \leq k \leq N$.

\item The attacker computes:
\begin{eqnarray}
 \kappa_t^+ &\doteq& \sum_{j=1}^N \mathbb{E} \bigg( (||\bm{\theta}^{(j)}(t)||^2 + \lambda(t-1) \tilde{\bm{z}}_j(t)' \bm{\Sigma}_j^{-1} \tilde{\bm{z}}_j(t)\nonumber\\
 && +\xi ||\bm{M}_k^+||_F^2 | \mathcal{F}_{t-1}, \{\bm{T}_k^+,  \bm{M}_k^+, \bm{d}_k^+\}_{1 \leq k \leq N}   \bigg) \nonumber
\end{eqnarray} 
using  \eqref{eqn:variance-of-theta} and \eqref{eqn:variance-of-z} under $\{\bm{T}_k^+,  \bm{M}_k^+, \bm{d}_k^+\}_{1 \leq k \leq N}$. The attacker computes $\kappa_t^-$ in a similar way using $\{\bm{T}_k^-,   \bm{M}_k^-, \bm{d}_k^-\}_{1 \leq k \leq N}$.
\item The attacker updates each element $(i,j)$ of $\bm{T}_k(t-1)$,  $\bm{M}_k(t-1)$ and $\bm{d}_k(t-1)$ for all $1 \leq k \leq N$ as follows:

\footnotesize
\begin{eqnarray}
 \bm{T}_k(t)(i,j) &=& \bigg[ \bm{T}_k(t-1)(i,j)-a(t) \times \frac{(\kappa_t^+ - \kappa_t^-)}{2 c(t) \bm{\Delta}_{(i,j)}^{(k)}(t)} \bigg]_{-A_0}^{A_0} \nonumber\\
  \bm{M}_k(t)(i,j) &=& \bigg[ \bm{M}_k(t-1)(i,j)-a(t) \times \frac{(\kappa_t^+ - \kappa_t^-)}{2 c(t) \bm{\Pi}_{(i,j)}^{(k)}(t)}  \bigg]_{-A_0}^{A_0}\nonumber\\
  \bm{d}_k(t)(i,1) &=& \bigg[ \bm{d}_k(t-1)(i,1)-a(t) \times \frac{(\kappa_t^+ - \kappa_t^-)}{2 c(t) \bm{\beta}_{(i,1)}^{(k)}(t)}  \bigg]_{-A_0}^{A_0} \nonumber\\
\end{eqnarray}
\normalsize

\item The sensors make observations $\{\bm{y}_k(t)\}_{1 \leq k \leq N}$, which are accessed by the attacker.
\item The attacker calculates $\bm{z}_k(t)=\bm{y}_k(t)-\bm{H}_k \bm{A} \hat{\bm{x}}^{(k)}(t-1)$  for all $k \in \{1,2,\cdots,N\}$.
\item The attacker calculates  $\bm{\tilde{z}_k}(t)=\bm{T}_k(t) \bm{z}_k(t)+\bm{b}_k(t)$ for all $k \in \{1,2,\cdots,N\}$, where $\bm{b}_k(t) = \bm{M}_k(t) \bm{\theta}^{(k)}(t-1)+\bm{d}_k(t)$. The observations are accordingly modified as $\tilde{\bm{y}}_k(t)=\tilde{\bm{z}}_k(t)+\bm{H}_k \bm{A} \hat{\bm{x}}^{(k)}(t-1)$ and sent to the agent nodes.

\item  The attacker updates the Lagrange multiplier as follows:

 {\em \textbf{If $\eta$ is known to attacker: OLAADE-SPSA-1}}

\begin{equation}\label{eqn:lambda-update-OLAADE-SPSA-2a}
 \lambda(t)=[\lambda(t-1)+b(t) (\sum_{k=1}^N \tilde{\bm{z}}_k(t)' \bm{\Sigma}_k^{-1} \tilde{\bm{z}}_k(t)-\frac{\alpha \eta}{J})]_0^{A_0}
\end{equation}

{\em \textbf{If $\eta$ is unknown to attacker but alarms are observable: OLAADE-SPSA-2}}

\begin{equation}\label{eqn:lambda-update-OLAADE-SPSA-2b}
 \lambda(t)=[\lambda(t-1)+b(t) (I_t-\alpha)]_0^{A_0}
\end{equation}
\item The agent nodes compute the estimates locally, using \eqref{eqn:KCF-equation} and  the modified $\{\tilde{\bm{y}}_k(t)\}_{1 \leq k \leq N}$. The agent nodes broadcast their estimates to their neighboring nodes.
\end{enumerate}
\noindent {\bf end}
\label{algorithm:correction-algorithm-learning} 
 \hrule
\vspace{2mm}

\subsection{Discussion of OLAADE-SPSA}

\begin{enumerate}
 \item If $\{\bm{T}_k\}_{1 \leq k \leq N}$ is kept fixed, then the first update in step~$4$ of OLAADE-SPSA is not required.
 \item The OLAADE-SPSA algorithm combines the online stochastic gradient descent (OSGD)  algorithm of \cite[Chapter~$3$]{hazan2016introduction} with two-timescale stochastic approximation of \cite{borkar08stochastic-approximation-book}. The $\lambda(t)$ iterate is updated in the slower timescale   to meet either the constraint in \eqref{eqn:constrained-optimization-problem} or the exact attack detection probability  constraint with equality. In the faster timescale, OSGD is used for solving  \eqref{eqn:unconstrained-optimization-problem}. Since $\lim_{t \rightarrow \infty} \frac{b(t)}{a(t)}=0$, the faster timescale iterates $\{\bm{T}_k(t),  \bm{M}_k(t), \bm{d}_k(t)\}_{1 \leq k \leq N}$ view the slower timescale iterate $\lambda(t)$ as quasi-static, while the $\lambda(t)$ iteration finds the faster timescale iterates as almost equilibriated; as if, the faster timescale iterates are varied in an inner loop and the slower timescale iterate is varied in an outer loop.
 \item Steps~$1-4$ of OLAADE-SPSA is basically using SGD, but via simultaneous perturbation stochastic approximation (SPSA; see \cite{spall92original-SPSA}). SPSA allows us to avoid coordinate wise perturbation for gradient estimation of the function under consideration, by providing a zero-mean random  perturbation to all coordinates (entries) of a vector or matrix variable simultaneously and independently.  Steps~$1-4$ of OLAADE-SPSA is equivalent to one iteration of SGD by using SPSA, where the time-varying function to optimize is $  \sum_{k=1}^N \mathbb{E}(||\bm{\theta}^{(k)}(t)||^2 + \lambda(t-1) \tilde{\bm{z}}_k(t)' \bm{\Sigma}_k^{-1} \tilde{\bm{z}}_k(t) + \xi ||\bm{M}_k||_F^2| \mathcal{F}_{t-1} )$.
 \item All iterates are projected onto various large but compact intervals to ensure boundedness.
\end{enumerate}

\section{Numerical results}\label{section:numerical-work}

We consider a distributed system with $N=6$ agent nodes and consider two different network topologies, the 3-regular hexagon and the line topology. The underlying process is q-dimensional, with $q=2$, while the observations recorded at each node $y_k \in \mathbb{R}^3$.  The system parameters $\bm{A}, \bm{Q}, \{\bm{R}_k\}_{1 \leq k \leq 6}, \{\bm{H}_k\}_{1 \leq k \leq 6}$ are chosen randomly and independently for the two different topologies. The KCF parameters $\{\bm{G}_k, \bm{C}_k\}_{1 \leq k \leq 6}$ are computed using a technique from \cite{saber09kalman-consensus-optimality-stability}, and $\{\bm{\Sigma}_k\}_{1 \leq k \leq 6}$ are computed by simulating the KCF under no attack. 

 For FDI attack, we set $\bm{x}^*=[2, 2]'$, $\eta=300$, $\chi^2$  window size $J=10$ and $\lambda(0)=4$ and regularization constant $\zeta = 0.5$. To maintain the convexity of the problem, we fix $\bm{T}_k(t)=\bm{I}$, $\forall$ $1 \leq k \leq 6$ and $\forall$  $t \geq 1$. We then allow the algorithm to run until convergence on $\lambda(t)$. The $\chi^2$ detector raises concerns about FDI though alarms. 
 
 For the attack variants KKT-1 and SPSA-1, the adversary does not have access to the alarms. In this case, we notice that the Markov inequality based upper bound to the detection probability $P_d$ as in \eqref{eqn:Markov-bound} is too loose in practice, which in turn leads to a higher than necessary penalty in $\lambda$ update equation \eqref{eqn:lambda-update-OLAADE-KKT-1}. To alleviate  this problem, we introduce a hyper-parameter $c$ to be multiplied to the term $\frac{\alpha\eta}{J}$, which is tuned to get closer to the detection probability upper bound. 
For the KKT-2 and SPSA-2 variants, since the attacker has access to alarm triggers at the nodes, such a hyper-parameter is not required. 

Motivated by the ADAM algorithm   \cite{KingmaB14}, we implement an adaptive step size optimization variant for $\lambda(t)$ for faster convergence. However, to be able to reasonably observe the effect of changing $\lambda$ on the detection probability, we update $\lambda$ on a lower timescale of $0.1 \times$, i.e., for each iterative update of $\lambda$, we let the underlying process be simulated for $10$ iterations before the next update. 
 
\subsection{OLAADE-KKT}
 Recall that for OLAADE-KKT, we want to obtain the value of $\lambda$ for optimizing the MSE from target vs detection probability trade-off. Once the $\lambda(t)$ iterate converges to $\lambda^*$, we  simulate multiple sample paths under this fixed $\lambda^*$, and calculate the deviation from target, i.e.,   $\frac{1}{T}\sum_{t=1}^T \sum_{k=1}^N||\hat{\bm{x}}^{(k)}(t)-\bm{x}^*||^2$ for each sample path.
 
  In Figure~\ref{fig:avg_mse_performance}, we demonstrate the effectiveness of the attack along one sample path, by plotting the   deviation of the state estimates from the specified target across the  nodes, under  attack and no attack cases.   
  The broader simulation results for OLAAD-KKT-1-LC are summarized in Table~\ref{table:table6} and Table~\ref{table:table7}. The mean and standard deviation values are obtained from 10 sample runs. Similar results for OLAAD-KKT-2-LC are summarized in Table~\ref{table:table8} and Table~\ref{table:table9}. For OLAAD-KKT-1-LC, we report the results for that particular choice of hyper-parameter which allowed us to achieve the detection probability closest to the target, based on a  grid-search. 

 \begin{figure}[h!]
 \begin{centering}
 \begin{center}
 \includegraphics[height=3.8cm, width=9cm]{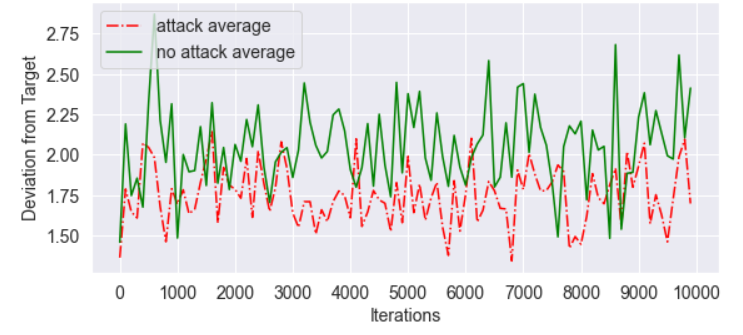}
 \vspace{-5mm}
 \caption{OLAADE-KKT-1-LC: Average MSE from $x^*$, 3-regular topology, $\alpha = 0.3$}
 \end{center}
 \end{centering}
 \label{fig:avg_mse_performance}
 \vspace{-5mm}
 \end{figure}

\begin{table}[h!]
\footnotesize
\centering
\scalebox{0.8}{\begin{tabular}{|c |c |c |c |c |}
\hline
Permissible   &  Detection &  Detection & Deviation & Deviation    \\
 detection  & probability & probability  & from $\bm{x}^*$  & from $\bm{x}^*$  \\ 
 probability ($\alpha$) & (no attack) & under FDI &  (no attack) & under FDI  \\
 \hline
 & & & & \\
0.2 & 0.044 +/- 0.003 & 0.186 +/- 0.01 & 2.062 +/- 0.002 & 1.793 +/- 0.002 \\
0.3 & 0.044 +/- 0.005 & 0.286 +/- 0.011 & 2.063 +/- 0.004 & 1.738 +/- 0.003 \\
\hline
 \end{tabular}}
\normalsize
\caption{$N=6$, 3-regular topology, OLAADE-KKT-1-LC}
\label{table:table6}
\end{table}

\begin{table}[h!]
\footnotesize
\centering
\scalebox{0.8}{\begin{tabular}{|c |c |c |c |c |}
\hline
  Permissible   &  Detection &  Detection & Deviation & Deviation    \\
 detection  & probability & probability  & from $\bm{x}^*$  & from $\bm{x}^*$  \\ 
 probability($\alpha$) & (no attack) & under FDI &  (no attack) & under FDI  \\
 \hline
 & & & & \\
0.25 & 0.047 +/- 0.004 & 0.235 +/- 0.009 & 2.045 +/- 0.007 & 1.289 +/- 0.003 \\
0.4 & 0.05 +/- 0.005 & 0.389 +/- 0.013 & 2.038 +/- 0.008 & 1.196 +/- 0.003 \\
\hline
 \end{tabular}}
\normalsize
\caption{$N=6$, Line topology,  OLAADE-KKT-1-LC}
\label{table:table7}
\end{table}

\begin{table}[h!]
\footnotesize
\centering
\scalebox{0.8}{\begin{tabular}{|c |c |c |c |c |}
\hline
Permissible   &  Detection &  Detection & Deviation & Deviation    \\
 detection  & probability & probability  & from $\bm{x}^*$  & from $\bm{x}^*$  \\ 
 probability ($\alpha$) & (no attack) & under FDI &  (no attack) & under FDI  \\
 \hline
 & & & & \\
0.2 & 0.046 +/- 0.003 & 0.178 +/- 0.008 & 2.063 +/- 0.005 & 1.799 +/- 0.004 \\
0.3 & 0.044 +/- 0.003 & 0.287 +/- 0.013 & 2.062 +/- 0.002 & 1.741 +/- 0.002 \\
\hline
 \end{tabular}}
\normalsize
\caption{$N=6$, 3-regular topology, OLAADE-KKT-2-LC}
\label{table:table8}
\end{table}

\begin{table}[h!]
\footnotesize
\centering
\scalebox{0.8}{\begin{tabular}{|c |c |c |c |c |}
\hline
  Permissible   &  Detection &  Detection & Deviation & Deviation    \\
 detection  & probability & probability  & from $\bm{x}^*$  & from $\bm{x}^*$  \\ 
 probability($\alpha$) & (no attack) & under FDI &  (no attack) & under FDI  \\
 \hline
 & & & & \\
0.25 &  0.05 +/- 0.004 & 0.223 +/- 0.01 & 2.048 +/- 0.012 & 1.305 +/- 0.005 \\
0.4 & 0.049 +/- 0.003 & 0.355 +/- 0.016 & 2.046 +/- 0.009 & 1.211 +/- 0.004 \\
\hline
 \end{tabular}}
\normalsize
\caption{$N=6$, Line topology,  OLAADE-KKT-2-LC}
\label{table:table9}
\end{table}

As mentioned previously, it is important to note that the underlying process parameters were different for the two topologies. This can be seen from the fact that the detection probability under the no-attack case varies slightly for the two settings. In fact, the nature of these underlying parameters often determines how well the attack can drive the estimates to the target value, while keeping the detection rate under $\alpha$. 

\subsection{OLAADE-SPSA}

We repeat the same set of experiments, with the same set of attack parameters for the OLAADE-SPSA attack scheme. Note that in this case, we want to estimate the values of $M$, $d$ for mounting an effective attack. As before, we report the mean performance of the attack, averaged over ten sample runs. It is again observed that OLAADE-SPSA is able to push all estimates closer to the target, while respecting the detection constraint.

\begin{table}[h!]
\footnotesize
\centering
\scalebox{0.8}{\begin{tabular}{|c |c |c |c |c |}
\hline
  Permissible   &  Detection &  Detection & Deviation & Deviation    \\
 detection  & probability & probability  & from $\bm{x}^*$  & from $\bm{x}^*$  \\ 
 probability ($\alpha$) & (no attack) & under FDI &  (no attack) & under FDI  \\
 \hline
 & & & & \\
0.2 & 0.043 +/- 0.005 & 0.189 +/- 0.012 & 2.062 +/- 0.003 & 1.804 +/- 0.003 \\
0.3 & 0.044 +/- 0.005 & 0.28 +/- 0.013 & 2.062 +/- 0.002 & 1.75 +/- 0.002 \\
\hline
 \end{tabular}}
\normalsize
\caption{$N=6$, 3-regular topology, OLAADE-SPSA-1}
\label{table:table10}
\end{table}

\begin{table}[h!]
\footnotesize
\centering
\scalebox{0.8}{\begin{tabular}{|c |c |c |c |c |}
\hline
  Permissible   &  Detection &  Detection & Deviation & Deviation    \\
  detection  & probability & probability  & from $\bm{x}^*$  & from $\bm{x}^*$  \\ 
 probability($\alpha$) & (no attack) & under FDI &  (no attack) & under FDI  \\
 \hline
 & & & & \\
0.25 & 0.052 +/- 0.006 & 0.232 +/- 0.014 & 2.042 +/- 0.014 & 1.491 +/- 0.010 \\
0.4 & 0.049 +/- 0.006 & 0.385 +/- 0.008 & 2.041 +/- 0.009 & 1.381 +/- 0.005 \\
\hline
 \end{tabular}}
\normalsize
\caption{$N=6$, Line topology,  OLAADE-SPSA-1}
\label{table:table11}
\end{table}

\begin{table}[h!]
\footnotesize
\centering
\scalebox{0.8}{\begin{tabular}{|c |c |c |c |c |}
\hline
Permissible   &  Detection &  Detection & Deviation & Deviation    \\
 detection  & probability & probability  & from $\bm{x}^*$  & from $\bm{x}^*$  \\ 
 probability ($\alpha$) & (no attack) & under FDI &  (no attack) & under FDI  \\
 \hline
 & & & & \\
0.2 & 0.043 +/- 0.004 & 0.184 +/- 0.01 & 2.064 +/- 0.003 & 1.805 +/- 0.002 \\
0.3 & 0.045 +/- 0.006 & 0.292 +/- 0.013 & 2.061 +/- 0.004 & 1.746 +/- 0.004 \\
\hline
 \end{tabular}}
\normalsize
\caption{$N=6$, 3-regular topology, OLAADE-SPSA-2}
\label{table:table12}
\end{table}

\begin{table}[h!]
\footnotesize
\centering
\scalebox{0.8}{\begin{tabular}{|c |c |c |c |c |}
\hline
   Permissible   &  Detection &  Detection & Deviation & Deviation    \\
  detection  & probability & probability  & from $\bm{x}^*$  & from $\bm{x}^*$  \\ 
 probability($\alpha$) & (no attack) & under FDI &  (no attack) & under FDI  \\
 \hline
 & & & & \\
 0.25 & 0.049 +/- 0.006 & 0.234 +/- 0.011 & 2.04 +/- 0.009 & 1.426 +/- 0.005 \\
 0.4 & 0.054 +/- 0.007 & 0.385 +/- 0.012 & 2.042 +/- 0.012 & 1.323 +/- 0.007 \\
\hline
 \end{tabular}}
\normalsize
\caption{$N=6, q=2$, Line topology,  OLAADE-SPSA-2}
\label{table:table13}
\end{table}

\subsection{Discussion}
We highlight some key takeaways from the simulation results. Firstly, the OLAADE-KKT attack variants are always at least as good or better than their OLAADE-SPSA counterparts, depending on the underlying process parameters. This matches our intuition, since the KKT variants are indeed provably optimal for the convex formulation. However, it is important to note that the KKT algorithms require us to solve a family of matrix equations at each iteration, which requires matrix inversion; this makes the computational complexity of the KKT variants per slot higher than that of the SPSA variants.

The second observation is that, the performance of the respective variants of KKT and SPSA when the adversary does not have direct access to alarms does not alter much even if access is made available. In practice,  however, this will seldom be the case, since the true values of $\eta$, $J$ and $\alpha$ are not directly available to the attacker apriori, and will therefore need to be assumed. Therefore, any conservative attacker without access to alarms would tend to lower the estimate for the detection threshold in order to avoid detection, and consequently, the performance of the attack without access to alarms will be worse.

\section{Conclusions}\label{section:conclusion}
In this paper, we  designed an optimal linear attack for distributed cyber-physical systems. The problem was posed an a constrained optimization problem.  The parameters of the attack scheme were learnt and optimized on-line, using tools from KKT, two-timescale stochastic approximation and SPSA. Numerical results demonstrated the efficacy of each of the proposed attack scheme. It is important to note that OLAAD-KKT based attacks require an active adversary in the sense that while the attack parameters converge in a distribution, they have to be updated in each iteration to remain effective. And while OLAAD-SPSA does not have that particular bottleneck, it can often require more effort to tune its parameters for convergence. In future, we seek to extend this work for unknown process and observation dynamics, and also prove convergence of the proposed algorithms.

\appendices

\section{Proof of Theorem~\ref{theorem:theta-and-z-evolution}}\label{appendix:proof-of-theta-and-z-evolution}
Under this FDI attack, we have:

\footnotesize
\begin{eqnarray}
 &&\hat{\bm{x}}^{(k)}(t) \nonumber\\
 &=& \bm{A} \hat{\bm{x}}^{(k)}(t-1)+ \bm{G}_k \tilde{\bm{z}}_k(t)
+ \bm{C}_k \sum_{j \in \mathcal{N}_k} (\bar{\bm{x}}^{(j)}(t)-\bar{\bm{x}}^{(k)}(t)) \nonumber\\
 &=& \bm{A} \hat{\bm{x}}^{(k)}(t-1)+ \bm{G}_k (\bm{T}_k(\bm{y}_k(t)-\bm{H}_k \bm{A} \hat{\bm{x}}^{(k)}(t-1))+\bm{b}_k(t)) \nonumber\\
 &+& \bm{C}_k \bm{A} \sum_{j \in \mathcal{N}_k} (\hat{\bm{x}}^{(j)}(t-1)-\hat{\bm{x}}^{(k)}(t-1)) 
\label{eqn_pd}
\end{eqnarray}
\normalsize

 Now, 

\footnotesize
\begin{eqnarray}
 &&\bm{\theta}^{(k)}(t) \nonumber\\
 &=& (\bm{A}-\bm{G}_k \bm{T}_k \bm{H}_k \bm{A}) \hat{\bm{x}}^{(k)}(t-1) \nonumber\\
 && + \bm{G}_k \bm{T}_k \underbrace{\bm{y}_k(t)}_{\doteq \bm{H}_k \bm{A}\bm{x}(t-1)+ \bm{H}_k \bm{w}(t-1)+\bm{v}_k(t)}+ \bm{G}_k \bm{b}_k(t)\nonumber\\
&& + \bm{C}_k \bm{A} \sum_{j \in \mathcal{N}_k} (\hat{\bm{x}}^{(j)}(t-1)-\hat{\bm{x}}^{(k)}(t-1))-\bm{x}^*\nonumber\\
 &=& (\bm{A}-\bm{G}_k \bm{T}_k \bm{H}_k \bm{A}) \bm{\theta}^{(k)}(t-1)+ \bm{G}_k \bm{T}_k \bm{H}_k \bm{A}\bm{\phi}(t-1) \nonumber\\
 && + \bm{C}_k \bm{A} \sum_{j \in \mathcal{N}_k} (\bm{\theta}^{(j)}(t-1)-\bm{\theta}^{(k)}(t-1))-(\bm{I}-\bm{A})\bm{x}^* \nonumber\\
 && + \bm{G}_k \bm{T}_k \bm{H}_k \bm{w}(t-1) + \bm{G}_k \bm{b}_k(t) + \bm{G}_k \bm{T}_k \bm{v}_k(t) \nonumber\\
  &=& (\bm{A}-\bm{G}_k \bm{T}_k \bm{H}_k \bm{A}-N_k \bm{C}_k \bm{A}) \bm{\theta}^{(k)}(t-1) \nonumber\\
 && + \bm{C}_k \bm{A} \sum_{j \in \mathcal{N}_k} \bm{\theta}^{(j)}(t-1)-(\bm{I}-\bm{A})\bm{x}^* + \bm{G}_k \bm{T}_k \bm{H}_k \bm{A}\bm{\phi}(t-1) \nonumber\\
 && + \bm{G}_k \bm{T}_k \bm{H}_k \bm{w}(t-1) + \bm{G}_k \bm{b}_k(t) + \bm{G}_k \bm{T}_k \bm{v}_k(t) \label{eqn:theta-evolution}
\end{eqnarray}
\normalsize

Clearly, $\mathbb{E}(||\bm{\theta}^{(k)}(t)||^2| \mathcal{F}_{t-1})$  can be expressed as  \eqref{eqn:variance-of-theta}; in this expression, we have used the fact that, for a column vector $\bm{a}$, $||\bm{a}||_2^2=\mbox{Tr}(\bm{a}\bm{a}')$ where $\bm{a}'$ is the transpose of $\bm{a}$.

On the other hand, given $\mathcal{F}_{t-1}$, $\bm{x}(t-1) \sim \mathcal{N}(\hat{\bm{x}}(t-1), \bm{R}(t-1))$ where $(\hat{\bm{x}}(t-1), \bm{R}(t-1))$ can be computed by a standard Kalman filter. Now, 

\footnotesize
\begin{eqnarray}\label{eqn:z-evolution}
 \tilde{\bm{z}}_k(t) 
 &=& \bm{T}_k \bm{z}_k(t) + \bm{b}_k(t) \nonumber\\
 &=& \bm{T}_k \bm{y}_k(t)-\bm{T}_k \bm{H}_k \bm{A} \hat{\bm{x}}^{(k)}(t-1)+ \bm{b}_k(t) \nonumber\\
  &=& \bm{T}_k (\bm{H}_k \bm{x}(t)+\bm{v}_k(t))-\bm{T}_k \bm{H}_k \bm{A} \hat{\bm{x}}^{(k)}(t-1)+ \bm{b}_k(t) \nonumber\\
    &=& \bm{T}_k \bm{H}_k \bm{A} \bm{x}(t-1)+ \bm{T}_k \bm{H}_k \bm{w}(t-1)+\bm{T}_k \bm{v}_k(t) \nonumber\\
    &&-\bm{T}_k \bm{H}_k \bm{A} \hat{\bm{x}}^{(k)}(t-1)+ \bm{b}_k(t)
\end{eqnarray}
\normalsize

which, given $\mathcal{F}_{t-1}$, is distributed as $\mathcal{N}(\bm{T}_k \bm{H}_k \bm{A} \hat{\bm{x}}(t-1)-\bm{T}_k \bm{H}_k \bm{A} \hat{\bm{x}}^{(k)}(t-1)+\bm{M}_k \bm{\theta}^{(k)}(t-1)+\bm{d}_k, \bm{T}_k \bm{H}_k \bm{Q} \bm{H}_k' \bm{T}_k'+ \bm{T}_k \bm{R}_k \bm{T}_k'+ \bm{T}_k \bm{H}_k \bm{A} \bm{R}(t-1) \bm{A}' \bm{H}_k' \bm{T}_k'+\bm{S}_k)$. Hence, $\mathbb{E}(\tilde{\bm{z}}_k(t)' \bm{\Sigma}_k^{-1} \tilde{\bm{z}}_k(t)|\mathcal{F}_{t-1})$ is given by \eqref{eqn:variance-of-z}.

\section{Proof of Lemma~\ref{lemma:convexity}}\label{appendix:proof-of-convexity}
The proof uses the fact that the function $||\sum_{i=1}^n c_i v_i +c||_2^2$ for any arbitrary real known  coefficients $\{c_i\}_{1 \leq i \leq n}$ and $c$ and scalar variables $\{v_i\}_{1 \leq i \leq n}$ is convex in $\{v_i\}_{1 \leq i \leq n}$, since Hessian of this function will be $[c_1, c_2,\cdots,c_n]'[c_1, c_2,\cdots,c_n]$ which is a positive semi-definite matrix. Hence, the first term in the R.H.S. of \eqref{eqn:variance-of-theta} is convex in the arguments. Just as another example, let us consider another term $\mbox{Tr}(\bm{\Sigma}_k^{-\frac{1}{2}} \bm{S}_k \bm{\Sigma}_k^{-\frac{1}{2}})$ from   \eqref{eqn:variance-of-z}; this can be rewritten as $\mbox{Tr}(\bm{\Sigma}_k^{-\frac{1}{2}} \bm{U}_k \bm{U}_k' \bm{\Sigma}_k^{-\frac{1}{2}})=||\bm{\Sigma}_k^{-\frac{1}{2}} \bm{U}_k||_F^2$ which is convex in $\bm{U}_k$ since $\bm{\Sigma}_k^{-\frac{1}{2}} \bm{U}_k$ is a linear function of $\bm{U}_k$. Convexity of other terms can be proven in a similar way.

\section{Proof of Lemma~\ref{lemma:proof-of-stability-of-theta}}
\label{appendix:proof-of-stability-of-theta}
Let us consider the evolution of $\bm{\theta}^{(k)}(t)$ in \eqref{eqn:theta-evolution}, and let $\bm{\theta}(t)$ be the vertical concatenation of the column vectors $\{ \bm{\theta}^{(k)}(t) \}_{1 \leq k \leq N}$. Hence, the evolution of $\bm{\theta}(t)$ is given by: $\bm{\theta}(t)=\bm{M} \bm{\theta}(t-1)+\bm{\zeta}_t$ where $\bm{\zeta}_t$ is a stable  Gaussian proces since $\bm{\phi}(t)$ is a stable process. Hence, $\{\bm{\theta}(t)\}_{t \geq 0}$ is a stable process if the spectral radius of $\bm{M}$ is less than $1$.

\section{Proof of Theorem~\ref{theorem:convergence-of-lambda-OLAADE-KKT-1}} \label{appendix:proof-of-convergence-of-lambda-OLAADE-KKT-1}
Note that, the $\{\bm{M}_k(t), \bm{d}_k(t)\}_{1 \leq k \leq N}$ update and hence the evolution of $g_t(\cdot)$ runs in a faster timescale, while the $\lambda(t)$ update runs in a slower timescale. Also $g_{t,\lambda}(\cdot)$ and $g_{\lambda}^*(\cdot)$ are continuously differentiable in $\lambda$ over a compact interval $[0,A_0]$, and hence are Lipschitz continuous. Clearly, by an argument similar to \cite[Chapter~$6$, Lemma~$1$]{borkar08stochastic-approximation-book}, we claim that $\lim_{t \rightarrow \infty}||g_t(\cdot)-g_{t,\lambda(t)}(\cdot)||_{TV}=0$ almost surely. This proves convergence in faster timescale. 

Now we will prove convergence in the slower timescale. 
Note that, using the fact that $\lambda(t) \in [0,A_0]$ for all $t \geq 0$, and using Assumption~\ref{assumption:M-has-max-eigenvalue-less-than-1} and Lemma~\ref{lemma:stability-of-innovation}, we can easily say that $\{ \tilde{\tilde{\bm{z}}}_k(t) \}_{1 \leq k \leq N}$ is stable under $\mu_{\lambda(t-1), \{\bm{M}_k(t-1), \bm{d}_k(t-1)\}_{1 \leq k \leq N} }$. Also,  note that $\{\bm{x(t)}, \hat{\bm{x}}^{(k)}(t), \bm{y}_k(t), \tilde{\tilde{z}}_k(t),   \bm{M}_k(t), \bm{d}_k(t)\}_{1 \leq k \leq N} \}_{t \geq 0}$ is a stable Markov chain under any $\mu_{\lambda(t-1), \{\bm{M}_k(t-1), \bm{d}_k(t-1)\}_{1 \leq k \leq N} }$ with $\lambda(t-1) \in [0,A_0]$. Hence, the $\lambda(t)$ iteration can be written as:

\footnotesize
\begin{eqnarray*}
    \lambda(t+1)&=&[\lambda(t-1)+b(t) (  \sum_{k=1}^N \mathbb{E}_{ \mu_{\lambda(t-1), \{\bm{M}_k(t-1), \bm{d}_k(t-1)\}_{1 \leq k \leq N} } } \\
    && \bigg( \tilde{\tilde{\bm{z}}}_k(t) \Sigma_k^{-1}  \tilde{\tilde{\bm{z}}}_k(t) \bigg) -\frac{\alpha J}{\eta} + \zeta_1(t))]_0^{A_0}
\end{eqnarray*}
\normalsize

where $\zeta_1(t) \doteq \sum_{k=1}^N  \tilde{\tilde{\bm{z}}}_k(t) \Sigma_k^{-1}  \tilde{\tilde{\bm{z}}}_k(t)  -\sum_{k=1}^N \mathbb{E}_{\mu_{\lambda(t-1), \{\bm{M}_k(t-1), \bm{d}_k(t-1)\}_{1 \leq k \leq N} }} \bigg( \tilde{\tilde{\bm{z}}}_k(t) \Sigma_k^{-1}  \tilde{\tilde{\bm{z}}}_k(t) \bigg)$ is a zero-mean Martingale difference noise. Now, 

\footnotesize
\begin{eqnarray*}
&& \sum_{k=1}^N \mathbb{E}_{\mu_{\lambda(t-1), \{\bm{M}_k(t-1), \bm{d}_k(t-1)\}_{1 \leq k \leq N} }} \bigg( \tilde{\tilde{\bm{z}}}_k(t) \Sigma_k^{-1}  \tilde{\tilde{\bm{z}}}_k(t) \bigg) \\
&=&\lim_{\tau \rightarrow \infty}\sum_{k=1}^N \mathbb{E}_{\mu_{\lambda(t-1), \{\bm{M}_k(t-1), \bm{d}_k(t-1)\}_{1 \leq k \leq N} }} \\
&& \bigg( \tilde{\tilde{\bm{z}}}_k(\tau) \Sigma_k^{-1}  \tilde{\tilde{\bm{z}}}_k(\tau) \bigg)+ o(1)\\
&=& \sum_{k=1}^N \mathbb{E}_{\mu_{\lambda(t-1), \{\bm{M}_k(t-1), \bm{d}_k(t-1)\}_{1 \leq k \leq N} }}  \bigg( \tilde{\tilde{\bm{z}}}_k(\infty) \Sigma_k^{-1}  \tilde{\tilde{\bm{z}}}_k(\infty) \bigg)+ o(1) \\
&=& \sum_{k=1}^N \mathbb{E}_{ \{\bm{M}_k, \bm{d}_k\}_{1 \leq k \leq N}\sim g_{t,\lambda(t-1)}(\cdot) } \mathbb{E}_{ \mu_{\lambda(t-1), \{\bm{M}_k, \bm{d}_k\}_{1 \leq k \leq N}}} \\
&& \bigg( \tilde{\tilde{\bm{z}}}_k(\infty) \Sigma_k^{-1}  \tilde{\tilde{\bm{z}}}_k(\infty) \bigg)+ o(1)+\zeta_2(t) \\
&=& \sum_{k=1}^N \mathbb{E}_{ \{\bm{M}_k, \bm{d}_k\}_{1 \leq k \leq N}\sim g_{\lambda(t-1)}^*(\cdot) } \mathbb{E}_{\lambda(t-1), \mu_{\{\bm{M}_k, \bm{d}_k\}_{1 \leq k \leq N}}} \\
&& \bigg( \tilde{\tilde{\bm{z}}}_k(\infty) \Sigma_k^{-1}  \tilde{\tilde{\bm{z}}}_k(\infty) \bigg)+o(1)+o(1)+\zeta_2(t) \\
\end{eqnarray*}
\normalsize

where the first equality follows from the stability of the above Markov chain, and the second equality follows from the dominated convergence theorem. The third equality uses the fact that $X=\mathbb{E}(X)+X-\mathbb{E}(X)$, with $\zeta_2(t)$ being a Martingale difference noise. The fourth equality follows from the fact that $\lim_{t \rightarrow \infty}||g_{t,\lambda}(\cdot)-g_{\lambda}^*(\cdot)||_{TV}=0$ and the dominated convergence theorem.

Hence, the $\lambda(t)$ iteration can be rewritten as:

\footnotesize
\begin{eqnarray*}
    \lambda(t+1)&=&\bigg[\lambda(t-1)+b(t) ( \sum_{k=1}^N \mathbb{E}_{ \{\bm{M}_k, \bm{d}_k\}_{1 \leq k \leq N}\sim g_{\lambda(t-1)}^*(\cdot) } \\
    && \mathbb{E}_{\mu_{\lambda(t-1), \{\bm{M}_k, \bm{d}_k\}_{1 \leq k \leq N}}} \\
&& \bigg( \tilde{\tilde{\bm{z}}}_k(\infty) \Sigma_k^{-1}  \tilde{\tilde{\bm{z}}}_k(\infty) \bigg)+ \zeta_1(t) +\zeta_2(t)+o(1) \bigg]_0^{A_0}
\end{eqnarray*}
\normalsize

Now, since $g_{\lambda}^*(\cdot)$ is continuous in $\lambda$, we can say that  $ \mathbb{E}_{ \{\bm{M}_k, \bm{d}_k\}_{1 \leq k \leq N}\sim g_{\lambda(t-1)}^*(\cdot) } 
     \mathbb{E}_{\mu_{\lambda(t-1), \{\bm{M}_k, \bm{d}_k\}_{1 \leq k \leq N}}}  ( \tilde{\tilde{\bm{z}}}_k(\infty) \Sigma_k^{-1}  \tilde{\tilde{\bm{z}}}_k(\infty) )$ is continuously differentiable in   $\lambda(t-1) \in [0, A_0]$ and hence Lipschitz continuous.  Also, the offset $o(1)$ goes to $0$ as $t \rightarrow \infty$. Hence, by the theory of basic stochastic approximation  \cite[Chapter~$2$]{borkar08stochastic-approximation-book}, two-timescale  stochastic approximation  \cite[Chapter~$6$]{borkar08stochastic-approximation-book} and projected stochastic approximation  \cite[Chapter~$5$]{borkar08stochastic-approximation-book}, we can say that $\lambda(t) \rightarrow \Lambda$ almost surely.

{\small
\bibliographystyle{unsrt}
\bibliography{arpantechreport}
}

%
%

%
%

\end{document}